\documentclass{lmcs}
\pdfoutput=1

\usepackage{silence}
\WarningFilter{enumitem}{Negative labelwidth}

\usepackage{lastpage}
\lmcsdoi{16}{1}{25}
\lmcsheading{}{\pageref{LastPage}}{}{}%
{Feb.~14,~2018}{Feb.~25,~2020}{}

\usepackage[utf8]{inputenc}
\usepackage{amssymb}
\usepackage{amsmath, amsthm, latexsym}
\usepackage{amsfonts}
\usepackage{xspace}
\usepackage{url}
\usepackage{ifthen}
\usepackage{theoremref}

\keywords{enumeration, first-order, constant delay, bounded expansion}

\newcommand{\CDlin}{\ensuremath{\textsc{\small CD}\!\circ\!\textsc{\small Lin}}\xspace}
\newcommand{\set}[1]{\ensuremath{\{#1\}}\xspace}


\newcommand{\cD}{\ensuremath{{\textbf{D}}}\xspace}

\newcommand{\C}{\ensuremath{\mathcal{C}}\xspace}
\newcommand{\Cnum}[1]{\ensuremath{\C_{#1}}\xspace}
\newcommand{\Cp}{\Cnum{p}}
\newcommand{\Cq}{\Cnum{q}}

\newcommand{\indegname}{\ensuremath{\Gamma_{\C}}\xspace}
\newcommand{\indeg}[1]{\ensuremath{\Gamma_{\C}(#1)}\xspace}

\newcommand{\sig}[1]{\ensuremath{\sigma_{\C}(#1)}\xspace}
\newcommand{\sigp}{\sig{p}}
\newcommand{\sigq}{\sig{q}}

\newcommand{\indegnameprim}{\ensuremath{\Gamma_{\C'}}\xspace}

\newcommand{\sigprim}[1]{\ensuremath{\sigma_{\C'}(#1)}\xspace}

\newcommand{\sigf}[1]{\ensuremath{\alpha_{\C}(#1)}\xspace}
\newcommand{\sigfp}{\sigf{p}}
\newcommand{\sigfq}{\sigf{q}}

\newcommand{\graph}{\ensuremath{\textbf{G}}\xspace}

\newcommand{\graphH}{\ensuremath{\textbf{H}}\xspace}
\newcommand{\cfun}[1]{\ensuremath{\textsc{f}\vec{\textbf{#1}}}\xspace}
\newcommand{\cvec}[1]{\ensuremath{\vec{\textbf{#1}}}\xspace}
\newcommand{\cG}{\cfun{G}}
\newcommand{\cgraph}{\cvec{\graph}}
\newcommand{\cGp}{\ensuremath{\cfun{G}'}\xspace}
\newcommand{\cGpp}{\ensuremath{\cfun{G}''}\xspace}
\newcommand{\cH}{\cfun{H}}


\newcommand{\FO}{\textup{FO}\xspace}
\newcommand{\MSO}{\textup{MSO}\xspace}

\newcommand{\DNF}{\textup{DNF}\xspace}

\newcommand{\Nat}{\mathbb{N}}
\newcommand{\Real}{\mathbb{R}}

\newcommand{\Aug}{\ensuremath{\cgraph_0 \subseteq \cgraph_{1} \subseteq \cgraph_2 \subseteq \ldots}}

\newcommand{\Dm}{\ensuremath{\Delta^{-}}}

\newcommand{\size}[1]{|\!|#1|\!|}


\newcommand{\adjacency}[1]{\ensuremath{\text{Adjacency}(#1)}}

\newcommand{\nil}{\ensuremath{\text{NULL}}\xspace}

\newcommand{\cw}[3]{\ensuremath{|#1(#2)|_{#3}}\xspace}
\newcommand{\psipp}{\ensuremath{\psi^{+}_{\textup{NF}}}\xspace}


\newcommand{\eA}{Example A-\!}
\newcommand{\eB}{Example B-\!}


\theoremstyle{definition}
\newtheorem{myexa}{\eA\!}
\newtheorem{myexb}{\eB\!}

\addtolength\marginparwidth{1cm}
\newcommand\todocolorempty[2]{
}
\newcommand\todocolornonempty[2]{
	\ifhmode\newline\fi
	\noindent\framebox{\parbox{\columnwidth}{\textcolor{#2}{#1}}}
}

\newcommand\wojtekempty[1]{\todocolorempty{{\bf Wojtek}: #1}{blue}}

\newcommand{\vb}[1]{}

\begin{document}

\title{First-order queries on classes of structures with bounded expansion}

\author[W.~Kazana]{Wojciech Kazana}
\author[L.~Segoufin]{Luc Segoufin}
\address{INRIA and ENS Cachan}

\begin{abstract}
  We consider the evaluation of first-order queries over classes of databases
  with \emph{bounded expansion}. The notion of bounded expansion is fairly broad
  and generalizes bounded degree, bounded treewidth and exclusion of at least
  one minor. It was known that over a class of databases with bounded
  expansion, first-order sentences could be evaluated in time linear in the
  size of the database. We give a different proof of this
  result. Moreover, we show that answers to first-order queries can be
  enumerated with constant delay after a linear time preprocessing.  We also
  show that counting the number of answers to a query can be done in time
  linear in the size of the database.
\end{abstract}

\maketitle

\section{Introduction}
Query evaluation is certainly the most important problem in databases. Given a
query $q$ and a database \cD it computes the set $q(\cD)$ of all tuples in the
output of $q$ on \cD. However, the set $q(\cD)$ may be larger than the database
itself as it can have a size of the form $n^l$ where $n$ is the size of the
database and $l$ the arity of the query. Therefore, computing entirely $q(\cD)$
may require too many of the available resources.

There are many solutions to overcome this problem. For instance one could
imagine that a small subset of $q(\cD)$ can be quickly computed and that this
subset will be enough for the user needs. Typically one could imagine computing the
top-$\ell$ most relevant answers relative to some ranking function or to provide a
sampling of $q(\cD)$ relative to some distribution. One could also imagine
computing only the number of solutions $|q(\cD)|$ or providing an efficient test
for whether a given tuple belongs to $q(\cD)$ or not.

In this paper we consider a scenario consisting in enumerating $q(\cD)$ with
constant delay. Intuitively, this means that there is a two-phase algorithm
working as follows: a preprocessing phase that works in time linear in the size
of the database, followed by an enumeration phase outputting one by one all the
elements of $q(\cD)$ with a constant delay between any two consecutive
outputs. In particular, the first answer is output after a time linear in the
size of the database and once the enumeration starts a new answer is being output
regularly at a speed independent from the size of the database. Altogether, the set
$q(\cD)$ is entirely computed in time $f(q)(n+|q(\cD)|)$ for some function $f$
depending only on $q$ and not on \cD.

One could also view a constant delay enumeration algorithm as follows. The preprocessing
phase computes in linear time an index structure representing the set $q(\cD)$ in
a compact way (of size linear in $n$). The enumeration algorithm is then a
streaming decompression algorithm.

One could also require that the enumeration phase outputs the answers in some
given order. Here we will consider the lexicographical order based on a
linear order on the domain of the database.

There are many problems related to enumeration. The main one is the model
checking problem. This is the case when the query is boolean, i.e., outputs only
\emph{true} or \emph{false}. In this case a constant delay enumeration algorithm is a Fixed Parameter Linear
(FPL) algorithm for the model checking problem of $q$, i.e., it works in time
$f(q)n$. This is a rather strong constraint as even the model checking
problem for conjunctive queries is not FPL (assuming some hypothesis in
parametrized complexity)~\cite{PY99}. Hence, in order to obtain constant delay
enumeration algorithms, we need to make restrictions on the queries and/or on
the databases. Here we consider first-order (\FO) queries over classes of structures
having ``bounded expansion''.

The notion of class of graphs with bounded expansion was introduced by Ne\v{s}et\v{r}il and Ossona de Mendez in~\cite{NesetrilMendez08}. Its precise
definition can be found in Section~\ref{section-augs}. At this point it is only useful
to know that it contains the class of graphs of bounded degree, the class of
graphs of bounded treewidth, the class of planar graphs, and any class of graphs
excluding at least one minor. This notion is generalized to classes of structures
via their Gaifman graphs or adjacency graphs.

For the class of structures with bounded degree and \FO queries
the model checking problem is in FPL~\cite{Seese96} and there also are constant delay
enumeration algorithms~\cite{DurandGrandjean07,WL11}. In the case of structures
of bounded treewidth and \FO queries (actually even \MSO queries with
first-order free variables) the model checking problem is also in FPL~\cite{Courcelle90}
and there are constant delay enumeration
algorithms~\cite{Bagan06,WL12}. For classes of structures with bounded
expansion the model checking problem for \FO queries was recently shown to be
in FPL~\cite{DKT11,GK11}.

\medskip

{\bf Our results can be summarized as follows}. For \FO queries and any class of
structures with bounded expansion:
\begin{itemize}
    \item we provide a new proof that the model checking problem can be solved in FPL,
    \item we show that the set of solutions to a query can be enumerated with constant delay,
    \item we show that computing the number of solutions can be done in FPL,
    \item we show that, after a preprocessing in time linear in the size of the database, one can test on input $\bar a$ whether $\bar a \in q(\cD)$ in constant time.
\end{itemize}

\medskip
\noindent
Concerning model checking, our method uses a different technique than the
previous ones. There are several characterizations of classes having bounded
expansion~\cite{NesetrilMendez08}. Among them we find characterizations via
``low tree depth coloring'' and ``transitive fraternal augmentations''. The
previous methods were based on the low tree depth coloring characterization
while ours is based on transitive fraternal augmentations. We show that it is
enough to consider quantifier-free queries in a given normal form. The normal
form is at the core of our algorithms for constant delay enumeration and for
counting the number of solutions. As for the previous proofs, we exhibit a
quantifier elimination method, also based on our normal form. Our quantifier
elimination method results in a quantifier-free query but over a recoloring
of a functional representation of a ``fraternal and transitive augmentation''
of the initial structure.

Our other algorithms (constant delay enumeration, counting the number of
solution or testing whether a tuple is a solution or not) start by eliminating
the quantifiers as for the model checking algorithm. The quantifier-free case
is already non trivial and require the design and the computation of new index
structures. For instance consider the simple query $R(x,y)$. Given a pair
$(a,b)$ we would like to test whether $(a,b)$ is a tuple of the database in
constant time. In general, index structures can do this with $\log n$ time. We
will see that we can do constant time, assuming bounded expansion.

In the presence of a linear order on the domain of the database, our constant
delay algorithm can output the answers in the corresponding lexicographical
order.

\paragraph{Related work}

We make use of a functional representation of the initial structures. Without
this functional representations we would not be able to eliminate all
quantifiers. Indeed, with this functional representation we can talk of a node
at distance 2 from $x$ using the quantifier-free term $f(f(x))$, avoiding the
existential quantification of the middle point. This idea was already taken
in~\cite{DurandGrandjean07} for eliminating first-order quantifiers over structures of
bounded degree. Our approach differs from theirs in the fact that in the bounded
degree case the functions can be assumed to be permutations (in particular
they are invertible) while this is no longer true in our setting, complicating
significantly the combinatorics.
\wojtekempty{Should we mention that~\cite{DKT11}
also use functional representation? They do slightly different thing:
they have full/mixed signatures and with functions they encode the
closure of the low tree-depth embedding (similarly to~\cite{GK11}) and
not the actual graph. For now we do not comment on this. As for~\cite{GK11} we do not comment either as some/many details are not
present in the paper.}

Once we have a quantifier-free query, constant delay enumeration could also
be obtained using the characterization of bounded expansion based on low tree
depth colorings. Indeed, using this characterization one can easily show that
enumerating a quantifier-free query over structures of bounded expansion
amounts in enumerating an \MSO query over structures of bounded tree-width and
for those known algorithms exist~\cite{Bagan06,WL12}. However, the known
enumeration algorithms of \MSO over structures of bounded treewidth are rather
complicated while our direct approach is fairly simple. Actually, our proof
shows that constant delay enumeration of \FO queries over structures of
bounded treewidth can be done using simpler algorithms than for \MSO queries. Moreover,
it gives a constant delay algorithm outputting the solutions in lexicographical
order. No such algorithms were known for \FO queries over structures of bounded
treewidth. In the bounded degree case, both enumeration algorithms
of~\cite{DurandGrandjean07,WL11} output their solutions in lexicographical
order.

Similarly, counting the number of solutions of a quantifier-free query over
structures of bounded expansion reduces to counting the number of solutions of
a \MSO query over structures of bounded treewidth. This latter problem is known to be in
FPL~\cite{ALS91}. We give here a direct and simple proof of this fact for \FO
queries over structures of bounded expansion.

Our main result is about enumeration of first-order queries. We make use of a
quantifier elimination method reducing the general case to the quantifier-free
case. As a special we obtain a new proof of the linear time model checking
algorithm, already obtained in~\cite{DKT11,GK11}. Both these results were also
obtained using (implicitly or explicitly) a quantifier elimination method. As
our enumeration of quantifier-free query also needs a specific normal form,
we could not reuse the results of~\cite{DKT11,GK11}. Hence we came up with our
own method which differ in the technical details if not in the main ideas.

In~\cite{DKT11} it is also claimed that the index structure used for quantifier
elimination can be updated in constant time. It then tempting to think that
enumeration could be achieved by adding each newly derived output tuple to the index
structure and obtain the next output in constant time using the updated index.
This idea does not work because constant update time can only be achieved if
the inserted tuple does not modify too much the structure of the underlying graph. In
particular the new structure must stay within the class under
investigation. This is typically not the case with first-order query that may
produce very dense results.

\medskip

This paper is the journal version of~\cite{DBLP:conf/pods/KazanaS13}. All proofs
are now detailed and the whole story has been simplified a bit, without changing
the key ideas. Since the publication of the conference version, constant delay
enumeration has been obtained for first-order queries over any class of
structures having local bounded expansion~\cite{DBLP:conf/icdt/SegoufinV17} or
being nowhere dense~\cite{DBLP:conf/pods/SchweikardtSV18}
These two classes of structures generalize bounded expansion. However the
preprocessing time that has been achieved for these two classes is not linear but
pseudo-linear (i.e for any $\epsilon$ there is an algorithm working in time
$O(n^{1+\epsilon})$) and the enumeration algorithms are significantly more complicated.

\section{Preliminaries}
In this paper a database is a finite relational structure.  A \emph{relational
  signature} is a tuple $\sigma=(R_{1}, \ldots, R_{l})$, each $R_i$ being a
relation symbol of arity $r_i$. A \emph{relational structure} over $\sigma$ is
a tuple $\cD = \left(D, R^{\cD}_{1}, \ldots, R^{\cD}_{l} \right)$, where $D$ is
\emph{the domain} of \cD and $R^{\cD}_{i}$ is a subset of $D^{r_{i}}$.
We will often write $R_i$ instead of $R_i^{\cD}$ when $\cD$ is clear from the
context.

We use a standard notion of size. The \emph{size} of $R_i^{\cD}$, denoted
$\size{R_i^\cD}$ is the number of tuples in $R_i^{\cD}$ multiplied by the arity
$r_i$. The \emph{size} of the domain of $\cD$, denoted $|\cD|$, is the number
of elements in $D$. Finally the \emph{size} of \cD, denoted by $\size{\cD}$, is
\begin{equation*}
\size{\cD}=|\cD| + \Sigma_{R_i \in \sigma} \size{R_i^{\cD}}.
\end{equation*}

By \emph{query} we mean a formula of first-order logic, \FO, built
from atomic formulas of the form $x=y$ or $R_{i}(x_{1}, \ldots, x_{r_{i}})$ for
some relation $R_{i}$, and closed under the usual Boolean connectives
($\neg,\vee,\wedge$) and existential and universal quantifications
($\exists,\forall$). We write $\phi(\bar x)$ to denote a query whose free
variables are $\bar x$, and the number of free variables is called the
\emph{arity of the query}. A \emph{sentence} is a query of arity 0.  We use the
usual semantics, denoted $\models$, for first-order.  Given a structure \cD and
a query $q$, an \emph{answer} to $q$ in \cD is a tuple $\bar a$ of elements
of \cD such that $\cD \models q(\bar a)$. We write $q(\cD)$ for the set
of answers to $q$ in \cD, i.e. $q(\cD)=\set{\bar a ~|~ \cD \models q(\bar{a})}$. As usual, $|q|$ denotes the size of $q$.

Let \C be a class of structures. The model checking problem for \FO over \C is
the computational problem of given first-order \emph{sentence} $q$ and a database
$\cD\in\C$ to test whether $\cD\models q$ or not.

We now introduce our running examples.

\begin{myexa}\label{ex-a-mc}
  The first query has arity 2 and returns pairs of nodes at distance 2 in a
  graph. The query is of the form $\exists z E(x,z) \wedge
  E(z,y)$.

  Testing the existence of a solution to this query can be easily done in time
  linear in the size of the database. For instance one can go trough all nodes
  of the database and check whether it has non-nill in-degree and out-degree. The
  degree of each node can be computed in linear time by going through all edges
  of the database and incrementing the counters associated to its endpoints.
\end{myexa}

\begin{myexb}\label{ex-b-mc}
The second query has arity 3 and returns triples $(x,y,z)$ such that $y$ is
connected to $x$ and $z$ via an edge but $x$ is not connected to $z$. The query
is of the form $E(x,y) \wedge E(y,z) \wedge \lnot E(x,z)$.

It is not clear at all how to test the existence of a solution to this query in
time linear in the size of the database. The problem is similar to the one of
finding a triangle in a graph, for which the best know algorithm has complexity
even slightly worse than matrix multiplication~\cite{AlonYZ95}.
If the degree of the input structure is bounded by a constant $d$, we can test the
existence of a solution in linear time by the following algorithm. We first go
through all edges $(x,y)$ of the database and add $y$ to a list associated to
$x$ and $x$ to a list associated to $y$. It remains now to go through all
nodes $y$ of the database, consider all pairs $(x,z)$ of nodes in the
associated list (the number of such pairs is bounded by $d^2$) and then test
whether there is an edge between $x$ and $z$ (by testing whether $x$ is in the
list associated to $z$).

We aim at generalizing this kind of reasoning  to structures with bounded expansion.
\end{myexb}

Given a query $q$, we care about ``enumerating'' $q(\cD)$ efficiently. Let \C
be a class of structures. For a query $q(\bar x)$, \emph{the enumeration problem of
$q$ over \C} is, given a database $\cD\in\C$, to output the elements of $q(\cD)$
one by one with no repetition. The maximum time between any two consecutive
outputs of elements of $q(\cD)$ is called \emph{the delay}. The
definition  below requires a constant time delay.
We formalize these notions in the forthcoming section.

\subsection{Model of computation and enumeration}
We use Random Access Machines (RAM) with addition and uniform cost measure as a
model of computation. For further details on this model and its use in logic
see~\cite{DurandGrandjean07}.  In the sequel we assume that the input
relational structure comes with a linear order on the domain. If not, we use the
one induced by the encoding of the database as a word. Whenever we iterate
through all nodes of the domain, the iteration is with respect to
the initial linear order.

We say that the enumeration problem of $q$ over a class \C of structures is in
the class \CDlin, or equivalently that we can enumerate $q$ over \C with constant
delay, if it can be solved by a RAM algorithm which, on input $\cD\in\C$, can
be decomposed into two phases:
\begin{itemize} \itemsep1pt \parskip0pt \parsep0pt
	\item a precomputation phase that is performed in time $O(\size{\cD})$,
	\item an enumeration phase that outputs $q(\cD)$ with no repetition and
          a constant delay between two consecutive outputs. The enumeration
          phase has full access to the output of the precomputation phase but
          can use only a constant total amount of extra memory.
\end{itemize}

\noindent
Notice that if we can enumerate $q$ with constant delay, then all answers
can be output in time $O(\size{\cD}+|q(\cD)|)$ and the first output is computed in
time linear in $\size{\cD}$. In the particular case of boolean queries, the
associated model checking problem must be solvable in time linear in $\size{\cD}$.
Notice also that the total amount of memory used after computing all answers is
linear in $\size{\cD}$, while a less restrictive definition requiring only a
constant time delay between any two outputs may yield in a total amount of
memory linear in $\size{\cD} +\size{q(\cD)}$.

Note that we measure the running time complexity as a function of
$\size{\cD}$. The multiplicative factor will depend on the class \C of database
under consideration and, more importantly, on the query $q$. In our case we
will see that the multiplicative factor is non elementary in $|q|$ and that
cannot be avoided, see the discussion in the conclusion section.

We may in addition require that the enumeration phase outputs the answers to
$q$ using the lexicographical order. We then say that we can enumerate $q$ over
\C with constant delay in lexicographical order.

\begin{myexa}\label{ex-a-enum}
Over the class of all graphs, we cannot enumerate pairs of nodes at distance 2
with constant delay unless the  Boolean Matrix Multiplication problem can be
solved in quadratic time~\cite{BDG07}. However, over the class of graphs of degree $d$, there is
a simple constant delay enumeration algorithm. During the preprocessing phase, we
associate to each node the list of all its neighbors at distance 2. This can be done in
time linear in the size of the database as in \eB~\ref{ex-b-mc}. We then color in blue all
nodes having a non empty list and make sure each blue node points to the next
blue node (according to the linear order on the domain). This also can be
done in time linear in the size of the database and concludes the preprocessing
phase. The enumeration phase now goes through all blue nodes $x$ using the
pointer structure and, for each of them, outputs all pairs $(x,y)$ where $y$ is
in the list associated to $x$.
\end{myexa}

\begin{myexb}\label{ex-b-enum}
  Over the class of all graphs, the query of this example cannot be enumerated with
  constant delay because, as mentioned in \eB~\ref{ex-b-mc}, testing whether there is one solution is
  already non linear. Over the class of graphs of bounded degree, there is a
  simple constant delay enumeration algorithm, similar to the one from \eA~\ref{ex-a-enum}.
\end{myexb}

Note that in general constant delay enumeration algorithms are not closed under
any boolean operations. For instance if $q$ and
$q'$ can be enumerated with constant delay, we cannot necessarily enumerate $q \lor q'$ with constant delay
as enumerating one query after the other would break the ``no repetition''
requirement.  However, if we can enumerate with constant delay in the
lexicographical order, then a simple argument that resembles the problem of
merging two sorted lists shows closure under union:

\begin{lem}\label{disjunct-enum}
  If both queries $q(\bar x)$ and $q'(\bar x)$ can be enumerated in
  lexicographical order with constant delay then the same is true for
  $q(\bar x) \lor q'(\bar x)$.
\end{lem}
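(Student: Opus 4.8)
The plan is to mimic the classical "merge two sorted lists" routine, adapted to the constant-delay setting. Suppose we have constant-delay enumeration procedures $E$ and $E'$ for $q(\bar x)$ and $q'(\bar x)$ respectively, each producing its answers in lexicographical order. During the preprocessing phase for $q \lor q'$ we simply run the preprocessing phases of both $E$ and $E'$; this takes time $O(\size{\cD}) + O(\size{\cD}) = O(\size{\cD})$, as required.

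For the enumeration phase, I would maintain two "current" tuples, $\bar a$ obtained from the next call to $E$ and $\bar a'$ obtained from the next call to $E'$ (initialized by asking each procedure for its first answer; if one procedure has no answers we just relay the other). At each step I compare $\bar a$ and $\bar a'$ in the lexicographical order. If $\bar a <_{\text{lex}} \bar a'$, I output $\bar a$ and advance $E$ to its next answer; symmetrically if $\bar a' <_{\text{lex}} \bar a$, I output $\bar a'$ and advance $E'$. If $\bar a = \bar a'$, I output the common tuple once and advance \emph{both} $E$ and $E'$; this is exactly what guarantees the "no repetition" requirement, since a tuple lying in $q(\cD) \cap q'(\cD)$ would otherwise be produced twice. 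When one of the two procedures is exhausted, I finish by relaying the remaining outputs of the other. Since the answers of $E$ and $E'$ each arrive in increasing lexicographical order, a straightforward induction shows that the merged stream is exactly $q(\cD) \cup q'(\cD) = (q \lor q')(\cD)$, enumerated in lexicographical order with no repetition. Each step of the merge performs a single comparison of two fixed-arity tuples plus at most two constant-delay advance operations, so the delay between consecutive outputs is constant, and the only extra memory used is the two buffered tuples $\bar a, \bar a'$ plus two exhaustion flags — a constant amount.

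The only subtlety, and the one point worth spelling out carefully, concerns the memory discipline: the definition of \CDlin allows the enumeration phase only a constant amount of extra memory, and it must have read access to the output of the preprocessing phase. Here the enumeration phases of $E$ and $E'$ each separately satisfy this, reading from their own precomputed index structures, so running them "in parallel" and interleaving their outputs still uses only constant extra memory on top of the (linear-size, read-only) combined precomputation output. There is no real obstacle — the argument is essentially the textbook merge — but one should note that this closure property genuinely relies on the lexicographical-order guarantee, as remarked just before the statement: without a common order one cannot detect duplicates with constant memory, which is why constant-delay enumeration is not in general closed under union.
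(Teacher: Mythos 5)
Your proof is correct and follows essentially the same approach as the paper: both run the two preprocessing phases, then merge the two lexicographically sorted output streams, outputting the smaller of the two buffered tuples and advancing both procedures on a tie to avoid repetition. Your additional remarks on the constant-memory discipline are accurate but not needed beyond what the paper's shorter argument already establishes.
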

\begin{proof}
The preprocessing phase consists in the preprocessing phases of the enumeration
algorithms for $q$ and $q'$.

The enumeration phase keeps two values, the smallest element from $q(\cD)$ that was not yet output
and similarly the smallest element from $q'(\cD)$ that was not yet output. It then outputs the smaller of
the two values and replaces it in constant time with the next element from the appropriate set using the
associated enumeration procedure. In case the elements are equal, the value is output
once and both stored values are replaced with their appropriate successors.
\end{proof}

It will follow from our results that the enumeration problem of \FO over the
class of structures with ``bounded expansion'' is in \CDlin. The notion of
bounded expansion was defined in~\cite{NesetrilMendez08} for graphs and then
it was generalized to structures via their Gaifman or Adjacency graphs.
We start with defining it for graphs.

\subsection{Graphs with bounded expansion and augmentation}\label{section-augs}

By default a graph has no orientation on its edges and has colors on its
vertices.  In an \emph{oriented} graph every edge is an arrow going from the
source vertex to its target. We can view a (oriented or not) graph as a
relational structure
$\graph=(V^\graph,E^\graph,P_1^\graph, \ldots, P_l^\graph)$, where $V^\graph$
is the set of nodes, $E^\graph \subseteq V^{2}$ is the set of edges and, for
each $1 \leq i \leq l$, $P^\graph_i$ is a predicate of arity $1$, i.e., a color. We omit the
subscripts when \graph is clear from the context. In the nonoriented case, $E$
is symmetric and irreflexive and we denote by $\set{u,v}$ the edge between $u$
and~$v$. In the oriented case we denote by $(u,v)$ the edge from $u$ to~$v$. We
will use the notation \cgraph when the graph is oriented and $\graph$ in the
nonoriented case.  An \emph{orientation} of a graph \graph is any graph
$\cvec\graphH$ such that $\set{u,v}\in E^\graph$ implies
$(u,v)\in E^{\cvec\graphH}$ or $(v,u)\in E^{\cvec\graphH}$.  The
\emph{in-degree} of a node $v$ of \cgraph is the number of nodes $u$ such that
$(u,v) \in E$.  We denote by $\Dm(\cgraph)$ the maximum in-degree of a node of
\cgraph. Among all orientations of a graph \graph, we choose the following one,
which is computable in time linear in $\size{\graph}$. It is based on the
degeneracy order of the graph. We find the first node of minimal degree, orient
its edges towards it and repeat this inductively in the induced subgraph
obtained by removing this node. The resulting graph, denoted $\cgraph_0$, has
maximum in-degree which is at most twice the optimal value and that is enough
for our needs.

In~\cite{NesetrilMendez08} several equivalent definitions of bounded expansion
were shown. We present here only the one we will use, exploiting the notion of
``augmentations''.

Let \cgraph be an oriented graph. A \emph{$1$-transitive fraternal augmentation of
  \cgraph} is any graph $\cvec\graphH$ with the same vertex set as \cgraph and the
same colors of vertices, including all edges of \cgraph (with their
orientation) and such that for any three vertices $x, y, z$ of \cgraph we have
the following:
\begin{description} \itemsep1pt \parskip0pt \parsep0pt
\item[(transitivity)] if $(x, y)$ and $(y, z)$ are edges in \cgraph,
	then $(x, z)$ is an edge in \cvec\graphH,

\item[(fraternity)] if $(x, z)$ and $(y, z)$ are edges in \cgraph,
	then at least one of the edges: $(x,y)$, $(y,x)$ is in \cvec\graphH,

      \item[(strictness)] moreover, if \cvec\graphH contains an edge that was
        not present in \cgraph, then it must have been added by one of the
        previous two rules.
\end{description}

\noindent
Note that the notion of $1$-transitive fraternal augmentation is not a
deterministic operation.  Although transitivity induces precise edges,
fraternity implies nondeterminism and thus there can possibly be many different
$1$-transitive fraternal augmentations. We care here about choosing the
orientations of the edges resulting from the fraternity rule in order to
minimize the maximum in-degree.

Following~\cite{NesetrilMendez08partII} we fix a deterministic algorithm
computing a ``good'' choice of orientations of the edges induced by the
fraternity property. The precise definition of the algorithm is not important
for us, it only matters here that the algorithm runs in time linear in the
size of the input graph (see Lemma~\ref{lemma-frat-aug} below).
With this algorithm fixed, we can now speak of {\bf the}
$1$-transitive fraternal augmentation of \cgraph.

Let $\cgraph_0$ be an oriented graph. The \emph{transitive fraternal augmentation}
of $\cgraph_0$ is the sequence $\Aug$ such that for each $i \geq 1$
the graph $\cgraph_{i+1}$ is the $1$-transitive fraternal augmentation of
$\cgraph_{i}$. We will say that $\cgraph_i$ is the $i$-th augmentation of $\cgraph_0$.
Similarly we denote the \emph{transitive fraternal augmentation}
of a nonoriented graph $\graph$ by considering the orientation $\cgraph_0$
based on the degeneracy order as explained above.

\begin{defiC}[\cite{NesetrilMendez08}]\label{aug-bexp}
Let \C be a class of graphs. \C has bounded expansion if
there exists a function $\indegname: \Nat \rightarrow \Real$ such that for each
graph $\graph \in \C$  its transitive fraternal augmentation
$\Aug$ of $\graph$ is such that for each $i \geq 0$ we have
$\Dm(\cgraph_{i}) \leq \indeg{i}$.
\end{defiC}

Consider for instance a graph of degree $d$. Notice that the $1$-transitive
fraternal augmentation introduces an edge between nodes that were at
distance at most 2 in the initial graph. Hence, when starting with a graph of
degree $d$, we end up with a graph of degree at most $d^2$. This observation
shows that the class of graphs of degree $d$ has bounded expansion as
witnessed by the function $\Gamma(i)=d^{2^i}$. Exhibiting the function $\Gamma$
for the other examples of classes with bounded expansion mentioned in the introduction:
bounded treewidth, planar graphs, graphs excluding at least one minor, requires more
work~\cite{NesetrilMendez08}.

The following lemma shows that within a class \C of bounded expansion the
$i$-th augmentation of  $\graph\in\C$ can be computed in linear time, the
linear factor depending on $i$ and on $\C$.

\begin{lemC}[\cite{NesetrilMendez08partII}]\label{lemma-frat-aug}
  Let \C be a class of bounded expansion. For each $\graph\in\C$ and each
  $i\geq 0$,
  $\cgraph_i$ is computable from $\graph$ in time
  $O(\size{\graph})$.
\end{lemC}

A transitive fraternal augmentation introduces new edges in the graphs in
a controlled way. We will see that we can use these extra edges in order to
eliminate quantifiers in a first-order query. Lemma~\ref{lemma-frat-aug} shows
that this quantifier elimination is harmless for enumeration as it can be done in time linear in the size of the
database and can therefore be done during the preprocessing phase.

\subsection{Graphs of bounded in-degree as functional structures}\label{subsec-func-defs}

Given the definition of bounded expansion it is convenient to work with
oriented graphs. These graphs will always be such that the maximum in-degree is
bounded by some constant depending on the class of graphs under investigation.
It is therefore convenient for us to represent our graphs as functional
structures where the functions links the current node with its predecessors.
This functional representation turns out to be also useful for eliminating
some quantifiers.

A \emph{functional signature} is a tuple $\sigma=(f_{1}, \ldots,
f_{l},P_1,\ldots,P_m)$, each $f_i$ being a functional symbol of arity $1$ and
each $P_i$ being an unary predicate. A \emph{functional structure} over
$\sigma$ is then defined as for relational structures.  \FO is defined as usual
over the functional signature. In particular, it can use atoms of the form
$f(f(f(x)))$, which is crucial for the quantifier elimination step of
Section~\ref{section-MC} as the relational representation would require
existential quantification for denoting the same element.  A graph $\cgraph$ of
in-degree $l$ and colored with $m$ colors can be represented as a functional
structure $\cG$, where the unary predicates encode the various colors and
$v=f_i(u)$ if $v$ is the $i^{\text{th}}$ element (according to some
arbitrary order that will not be relevant in the sequel) such that $(v,u)$ is
an edge of $\cgraph$.
We call such node $v$ the \emph{$i^{\text{th}}$ predecessor} of $u$ (where
``$i^{\text{th}}$ predecessor'' should really be viewed as an abbreviation for
``the node $v$ such that $f_i(u)=v$'' and not as a reference to the chosen
order). If we do not care about the $i$ and we only want to say that $v$ is the
image of $u$ under some function, we call it a \emph{predecessor} of $u$.
If a node $u$ has $j$ predecessors with $j<l$, then we set $f_k(u)=u$ for all
$k>j$. This will allow us to detect how many predecessors a given node has
without quantifiers by checking whether $f_j(u)=u$ or not.
Given a nonoriented graph $\graph$ we define $\cG$ to be the functional representation of
$\cgraph_0$ as described above.  Note that $\cG$ is computable in time linear in
$\size{\graph}$ and that for each first order query $\phi(\bar x)$, over the
relational signature of graphs, one can easily compute a first order query $\psi(\bar x)$,
over the functional signature, such that
$\phi(\graph)=\psi(\cG)$.

\begin{myexa}\label{ex-a-enum-f}
Consider again the query computing nodes at distance 2 in a nonoriented
graph. There are four possible ways to orient a path of length~2. With the
functional point of view we further need to consider all possible
predecessors. Altogether the distance 2 query is now equivalent to:
\begin{equation*}
\bigvee_{f,g}
\begin{aligned}
f(g(x)) = y ~\lor~ g(f(y)) = x ~\lor~ f(x) = g(y) ~\lor
 \exists z~~ f(z) = x \wedge g(z) = y
\end{aligned}
\end{equation*}
where the small disjuncts correspond to the four possible orientations and the
big one to all possible predecessors, each of them corresponding to a function
name, whose number depends on the function $\indegname$.
\end{myexa}

\begin{myexb}\label{ex-b-enum-f}
Similarly, the reader can verify that the query of \eB~\ref{ex-b-mc} is equivalent to:
\begin{align*}
\bigvee_{f,g}
~~\bigwedge_{h}~~ (h(x) \neq z \wedge h(z) \neq x)
\wedge~~ &\big[(f(x) = y \wedge g(y) = z)\\[-.3cm]
&\vee (x = f(y) \wedge g(y) = z)\\
&\vee (f(x) = y \wedge y = g(z))\\
&\vee (x = f(y) \wedge y = g(z))\big].
      \end{align*}
\end{myexb}

{\bf Augmentation for graphs as functional structures.}
The notion of $1$-transitive fraternal augmentation can be adapted directly to
the functional setting. However it will be useful for us to enhance it with
extra information. In particular it will be useful to remember at which stage the
extra edges are inserted. We do this as follows.

Given a graph $\cG$, its $1$-transitive fraternal augmentation $\cG'$ is constructed as follows.
The signature of $\cG'$ extends the signature of $\cG$ with new function symbols for
taking care of the new edges created during the expansion and $\cG'$ is then an
expansion, in the logical sense, of $\cG$ over this new signature: $\cG$ and
$\cG'$ agree on the relations in the old signature.

For any pair of functions $f$ and $g$ in the signature of $\cG$ there is a new
function $h$ in the signature of $\cG'$ representing the transitive part of the
augmentation. It is defined as the composition of $f$ and $g$, i.e. $h^{\cG'}=f^{\cG} \circ g^{\cG}$

Similarly, for any pair of functions $f$ and $g$ in the signature of $\cG$, and
any node $x$ in the domain of both $f^{\cG}$ and $g^{\cG}$ there will be a
function $h$ in the new signature representing the fraternity part of the
augmentation. I.e $h$ is such that $h^{\cG'}(f^{\cG'}(x))=g^{\cG}(x)$ or
$h^{\cG'}(g^{\cG'}(x))=f^{\cG'}(x)$.

Given a class $\C$ of bounded expansion, the guarantees that the number of new function symbols
needed for the $i$-th augmentation is bounded by $\indeg{i}$ and does not
depend on the graph. Hence a class \C of bounded expansion generates finite functional signatures
$\sig{0} \subseteq \sig{1} \subseteq  \sig{2} \subseteq \ldots$ such that for any graph $\graph\in \C$ and for all $i$:
\begin{enumerate} \itemsep1pt \parskip0pt \parsep0pt
\item\label{aaa} $\cG_i$ is a functional structure over $\sig{i}$ computable in
  linear time from $\graph$,
\item\label{bbb} $\cG_{i+1}$ is an  expansion of $\cG_{i}$,
	\item\label{queryholds} for every \FO query $\phi(\bar x)$ over $\sig{i}$ and every $j \geq i$
	we have that $\phi(\cG_{i}) = \phi(\cG_{j})$.
\end{enumerate}

\noindent
We denote by $\sigf{i}$ the number of function symbols of $\sig{i}$.  Notice
that we have $\sigf{i} \leq \Sigma_{j\leq i} \indeg{j}$.

\medskip

We say that a functional signature $\sigma'$ is a \emph{recoloring} of $\sigma$
if it extends $\sigma$ with some extra unary predicates, also denoted as
\emph{colors}, while the functional part remains intact. Similarly, a
functional structure $\cG'$ over $\sigma'$ is a \emph{recoloring} of $\cG$ over
$\sigma$ if $\sigma'$ is a recoloring of $\sigma$ and $\cG'$ differs from $\cG$
only in the colors in $\sigma'$. We write \emph{$\phi$ is over a recoloring of
  $\sigma$} if $\phi$ is over $\sigma'$ and $\sigma'$ is a recoloring of
$\sigma$. Notice that the definition of bounded expansion is not sensitive to
the colors as it depends only on the binary predicates, hence adding any fixed finite
number of colors is harmless.

Given a class \C of graphs, for each $p \geq 0$, we define $\Cp$ to be the class of all recolorings
$\cG'_{p}$ of $\cG_p$ for some $\graph \in \C$. In other words $\Cp$ is the class of functional representations of all
recolorings of all $p$-th augmentations of graphs from $\C$. Note that all
graphs from $\Cp$ are recolorings of a structure in $\sig{p}$, hence they use at
most $\sigf{p}$ function symbols.

From now on we assume that all graphs from \C and all queries are
in their functional representation. It follows from the discussion above that this is without loss of generality.

\subsection{From structures to graphs}\label{sec-str-gr}

A class of structures is said to have bounded expansion if the set of adjacency
graphs of the structures of the class has bounded expansion.

The \emph{adjacency graph} of a relational structure $\cD$, denoted by
$\adjacency{\cD}$, is a functional structure defined as follows. The set of
vertices of $\adjacency{\cD}$ is $D \cup T$ where $D$ is the domain of $\cD$
and $T$ is the set of tuples occurring in some relation of \cD. For each
relation $R_i$ in the schema of \cD, there is a unary symbol $P_{R_i}$ coloring
the elements of $T$ belonging to $R_i$. For each tuple $t=(a_1,\dots,a_{r_i})$
such that $\cD \models R_i(t)$ for some relation $R_i$ of arity $r_i$, we have
an edge $f_j(t) = a_j$ for all $j\leq r_i$.

\begin{obs}\label{obs-str-adj}
It is immediate to see that for every relational structure $\cD$ we can
compute $\adjacency{\cD}$ in time $O(\size{\cD})$.
\end{obs}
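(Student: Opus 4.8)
The plan is to exhibit a single linear-time pass over the encoding of $\cD$ that writes down $\adjacency{\cD}$ directly. First I would allocate one vertex for every element of the domain $D$, which costs $O(|\cD|)$. Then, for each relation symbol $R_i$ of the signature and each tuple $t=(a_1,\dots,a_{r_i})$ with $\cD\models R_i(t)$ as listed in the encoding, I would create a fresh vertex for $t$, set the function values $f_j(t):=a_j$ for $j\le r_i$ (and $f_k(t):=t$ for the remaining function symbols, following the convention of Section~\ref{subsec-func-defs}), and add $t$ to the colour class $P_{R_i}$. Handling one tuple costs $O(r_i)$, so this pass runs in time $O\bigl(\sum_i \size{R_i^{\cD}}\bigr)$, and together with the initial allocation this is $O(\size{\cD})$, the hidden constant depending only on the signature (the number of relation symbols and the maximal arity).

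The one point deserving a word of care is that $T$ is defined as the \emph{set} of tuples occurring in some relation, whereas the naive scan above produces one vertex per listed occurrence, and the same tuple could in principle be listed more than once or occur in several relations of the same arity. One option is simply to keep this mildly redundant structure: it does not change bounded expansion (which only depends on the underlying binary edge relation, up to a bounded number of parallel copies) nor the first-order properties we use. If one insists on producing literally the structure of the definition, I would instead first radix-sort the tuples of each relation with respect to the given linear order on $D$ and discard duplicates; since each arity is bounded by a constant of the signature, this sorting and deduplication again runs in time $O(\size{\cD})$.

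I do not expect a genuine obstacle here: once the representation of functional structures is fixed (function tables indexed by vertices, plus colour bitmaps, as already implicit in the RAM model of Section~\ref{section-augs}'s preliminaries), the construction is entirely mechanical and the time bound is immediate. The only mild subtlety, as noted, is the set-versus-multiset question for $T$, and it is dispatched by the deduplication remark above.
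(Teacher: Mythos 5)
Your construction is correct and is exactly the ``immediate'' argument the paper has in mind (the paper gives no proof body for this observation): one linear scan that allocates a vertex per domain element and per tuple, sets $f_j(t)=a_j$, and colours tuple vertices with $P_{R_i}$, with cost $O(r_i)$ per tuple matching the definition of $\size{R_i^{\cD}}$. Your side remark on deduplicating tuple occurrences (or noting that the redundancy is harmless for bounded expansion and for the queries used) is a reasonable extra precaution but does not change the substance.
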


Let \C be a class of relational structures. We say that \C has \emph{bounded
  expansion} if the class \C' of adjacency graphs (seen as graphs)
of structures from \C has bounded expansion.

\begin{rem}
  In the literature, for instance~\cite{DKT11,GK11}, a class \C of
  relational structures is said to have bounded expansion if the class of their
  Gaifman graphs has bounded expansion. It is easy to show that if the class of
  \emph{Gaifman graphs} of structures from \C has bounded expansion then the
  class of \emph{adjacency graphs} of structures from \C also has bounded
  expansion. The converse is not true in general. However the converse holds if
  the schema is fixed, i.e. \C is a class of structures all having the same schema.
  We refer to~\cite{thesis-kazana} for the simple proofs of these facts.
\end{rem}

Let $\indegnameprim$ be the function given by Definition~\ref{aug-bexp} for
\C'. The following lemma is immediate. For instance $R(\bar x)$ is rewritten as
$\exists y P_R(y) \wedge \bigwedge_{1 \leq i \leq r} f_i(y) = x_i$.

\begin{lem}\label{lemma-structures-to-graphs}
  Let \C be a class of relational structures with bounded expansion and let \C'
  be the underlying class of adjacency graphs.  Let $\phi(\bar x) \in \FO$. In
  time linear in the size of $\phi$ we can find a query $\psi(\bar x)$ over
  $\sigprim{0}$ such that for all $\cD \in \C$ we have $\phi(\cD) =
  \psi(\adjacency{\cD})$.
\end{lem}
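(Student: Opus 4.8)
The plan is to give a straightforward syntactic translation of $\phi$ over the relational signature of $\cD$ into a query $\psi$ over the functional signature $\sigprim{0}$ of $\adjacency{\cD}$, proceeding by induction on the structure of $\phi$. The key point is that the adjacency graph keeps, for each tuple $t$ of $\cD$ lying in some relation $R_i$, a distinguished vertex carrying the colour $P_{R_i}$ together with functional edges $f_j(t) = a_j$ recording each coordinate. So an atom of $\cD$ has a quantifier-pattern encoding in $\adjacency{\cD}$, and the rest of the translation is purely homomorphic over the Boolean connectives and quantifiers.

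First I would fix the translation on atoms. As the text already indicates, a relational atom $R_i(x_1,\dots,x_{r_i})$ is rewritten as $\exists y\, P_{R_i}(y) \wedge \bigwedge_{1\le j\le r_i} f_j(y)=x_j$; an equality atom $x=y$ is kept verbatim. One must be a little careful that this existential quantifier ranges over the whole vertex set $D\cup T$ of $\adjacency{\cD}$, not just over $T$; but since $P_{R_i}$ can only colour elements of $T$, and since for a genuine tuple $t\in R_i^{\cD}$ the values $f_j(t)$ are exactly the coordinates of $t$, the witness is unique and lies in $T$, so the semantics is correct. Second, I would push the translation through the connectives: $(\phi_1 \wedge \phi_2)$, $(\phi_1 \vee \phi_2)$ and $\neg\phi_1$ translate to the same connective applied to the translations of the parts. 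Third, for quantifiers $\exists z\, \phi_1$ one has to be slightly careful: in $\cD$ the variable $z$ ranges over $D$, whereas in $\adjacency{\cD}$ a bare $\exists z$ ranges over $D\cup T$. I would therefore introduce a quantifier-free formula $\delta(z)$ that is true exactly on the elements of $D$ — for instance, $z$ is in $D$ iff it carries none of the tuple-colours $P_{R_i}$, or equivalently one can add during the (linear-time) construction of the adjacency graph a single colour $P_D$ marking domain elements — and relativise: $\exists z\,\phi_1 \rightsquigarrow \exists z\, (\delta(z) \wedge \widehat{\phi_1})$, and dually for $\forall$. A clean induction on $\phi$ then shows $\cD \models \phi(\bar a)$ iff $\adjacency{\cD} \models \psi(\bar a)$ for all $\bar a \in D^{|\bar x|}$, i.e. $\phi(\cD) = \psi(\adjacency{\cD})$.

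Finally I would check the complexity bookkeeping. Each atom blows up by a factor depending only on the maximal arity of the schema, each connective is copied, and each quantifier picks up the fixed quantifier-free guard $\delta$; hence $|\psi|$ is linear in $|\phi|$ (with a constant depending on the — fixed — schema of $\C$), and $\psi$ is manifestly computable from $\phi$ in time linear in $|\phi|$. The translated query uses only the function symbols $f_1,\dots,f_r$ and the colours $P_{R_i}$ (plus possibly $P_D$) already present in $\sigprim{0}$, so $\psi$ is over $\sigprim{0}$ as required.

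The only genuinely delicate point — and the one I would spend the most care on — is the relativisation of quantifiers to $D$: without it, an $\exists z$ in the translated query could be witnessed by a tuple-vertex of $T$, changing the answer set. Everything else is a routine structural induction. (Note also that the statement only asserts agreement of $\phi(\cD)$ and $\psi(\adjacency{\cD})$ on tuples of free variables, which by convention range over $D$, so no relativisation of the free variables themselves is needed.)
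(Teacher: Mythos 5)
Your proposal is correct and follows exactly the route the paper intends: the paper declares the lemma immediate and only records the atom translation $R(\bar x)\rightsquigarrow \exists y\, P_R(y)\wedge\bigwedge_i f_i(y)=x_i$, which is the core of your construction. You additionally spell out the relativisation of quantifiers to $D$ (e.g.\ via $\bigwedge_i \lnot P_{R_i}(z)$, which stays within $\sigprim{0}$) — a genuine subtlety the paper leaves implicit — and for literal set equality one would also conjoin the same guard on the free variables, but that is a one-line fix you have already essentially identified.
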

As a consequence of Lemma~\ref{lemma-structures-to-graphs} it follows that
model checking, enumeration and counting of first-order queries over relational
structures reduce to the graph case. Therefore in the rest of the paper we will
only concentrate on the graph case (viewed as a functional structure), but the
reader should keep in mind that all the results stated over graphs extend to
relational structures via this lemma.

\section{Normal form for quantifier-free first-order queries}

We prove in this section a normal form on quantifier-free first-order
formulas. This normal form will be the ground for all our algorithms later on.
It says that, modulo performing some extra augmentation steps, a
quantifier-free formula has a very simple form.

Fix class \C of graphs with bounded expansion. Recall that we are now
implicitly assuming that graphs are represented as functional
structures.

A formula is \emph{simple} if it does not contain atoms of the form
$f(g(x))$, i.e., it does not contain any compositions of functions.
We first observe that, modulo augmentations, any formula can be
transformed into a simple one.

\begin{lem}\label{lemma-simple}
  Let $\psi(\bar x)$ be a formula over a recoloring of $\sig{p}$. Then, for
  $q=p+|\psi|$, there is a simple formula $\psi'(\bar x)$ over a recoloring
  of $\sig{q}$ such that:

  for all graphs $\cG \in \Cp$ there is a graph $\cG' \in \Cq$ computable in time linear in
$\size{\cG}$ such that $\psi(\cG) = \psi'(\cG')$.

\end{lem}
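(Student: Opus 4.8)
The plan is to eliminate compositions $f(g(x))$ one at a time by introducing, at each step, a single new function symbol coming from the transitive part of one augmentation, and then to iterate. More precisely, suppose $\psi$ contains a subterm $f(g(x))$ where $f$ and $g$ are function symbols of $\sig{p}$ (or more generally of the current signature). By the construction of the $1$-transitive fraternal augmentation in the functional setting, the signature $\sig{p+1}$ contains a function symbol $h$ with $h^{\cG_{p+1}} = f^{\cG_p} \circ g^{\cG_p}$; in particular, in the $(p+1)$-st augmentation $\cG'$ of $\graph$ we have $h(x) = f(g(x))$ for every $x$ in the domain of $g^{\cG_p}$ composed with $f^{\cG_p}$. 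The only subtlety is the ``$u$ has fewer than $l$ predecessors'' convention: $f_k(u)=u$ when $u$ has no $k$-th predecessor, so $f(g(x))$ may denote $g(x)$ itself (when $g(x)$ has no $f$-predecessor) rather than a genuine distance-$2$ node, whereas $h$ on $\cG'$ may or may not follow the same convention. This is handled by replacing the atom $f(g(x)) = t$ in $\psi$ by a small quantifier-free case distinction: either $g(x)$ is a ``leaf for $f$'' (detected by the quantifier-free test $f(g(x)) = g(x)$, i.e.\ $f$ applied once — still a composition, but one we can bottom out on after finitely many steps, or better, by using a fresh color; see below) in which case we substitute $g(x)$ for $f(g(x))$, or it is not, in which case we substitute $h(x)$. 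Either way the number of distinct composition-depth-$\ge 2$ occurrences strictly decreases.

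The cleaner way to organize the induction, which I would actually carry out, is on $|\psi|$: each elimination of one outermost composition layer costs one augmentation step (hence the bound $q = p + |\psi|$), produces a formula whose size is bounded in terms of $|\psi|$ and the signature $\sig{p+1}$, and decreases the nesting depth of compositions. After at most $|\psi|$ steps all function nesting is gone and $\psi'$ is simple. At each step the new graph is obtained by one application of Lemma~\ref{lemma-frat-aug} (the functional version), which is computable in time linear in the size of the current graph; composing $|\psi|$ such steps keeps the total time linear in $\size{\cG}$, with the multiplicative constant depending on $p$, $|\psi|$ and $\C$. The recoloring claim — that $\cG' \in \Cq$ — holds because $\Cq$ is by definition closed under recolorings of $q$-th augmentations, and any fresh colors we introduce (e.g.\ a color marking, for each relevant $f$, the nodes that are ``$f$-leaves'', to let the leaf-test $f(g(x))=g(x)$ be replaced by a genuinely simple atom on $g(x)$) are harmless since bounded expansion is insensitive to colors.

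The correctness equality $\psi(\cG) = \psi'(\cG')$ then follows by unwinding: on $\cG'$, the new symbol $h$ evaluates exactly to $f\circ g$ on the relevant domain, and the color predicates evaluate exactly to the leaf sets, so each rewritten atom has the same truth value on $\bar a$ in $\cG'$ as the original atom had in $\cG$; property~\eqref{queryholds} of the augmentation hierarchy guarantees that this is consistent across the intermediate augmentations.

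I expect the main obstacle to be precisely the bookkeeping around the ``missing predecessor'' convention $f_k(u) = u$: the transitive augmentation defines $h = f\circ g$ as a genuine function, but whether $h$ inherits the self-loop convention in a way compatible with what $f(g(x))$ denotes on $\cG$ requires care, and getting the quantifier-free case split (leaf vs.\ non-leaf, via a fresh color) exactly right — so that no existential quantifier sneaks back in and the rewritten formula stays simple — is the delicate point. The rest (linear-time computability, the $q = p + |\psi|$ count, closure of $\Cq$ under recoloring) is routine given Lemma~\ref{lemma-frat-aug} and the setup of Section~\ref{subsec-func-defs}.
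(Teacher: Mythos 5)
Your proposal follows essentially the same route as the paper's proof: one augmentation step per layer of function nesting, replacing each composition $f(g(x))$ via the transitive function $h=f\circ g$ of the next augmentation, guarding the replacement with fresh colors, and iterating at most $|\psi|$ times with Lemma~\ref{lemma-frat-aug} giving linear time. The paper handles the bookkeeping you flag (the $f_k(u)=u$ convention, and also the $g$-leaf case your two-way split omits) more uniformly by rewriting $y=f(g(x))$ as $\bigvee_h y=h(x)\wedge P_{f,g,h}(x)$, where $P_{f,g,h}$ is a new color marking exactly the nodes $v$ with $f(g(v))=h(v)$, which subsumes your case analysis.
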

\begin{proof}
  This is a simple consequence of transitivity. Any composition of two functions in
  $\cG$ represents a transitive pair of edges and becomes an single edge in
  the $1$-augmentation $\cH$ of $\cG$. Then $y=f(g(x))$ over $\cG$ is
  equivalent to $\bigvee_h y=h(x) \wedge P_{f,g,h}(x)$ over $\cH$, where $h$ is
  one of the new function introduced by the augmentation and
  the  newly introduced color $P_{f,g,h}$ holds for those nodes $v$, for which the
  $f(g(v))=h(v)$. As the nesting of compositions of functions is at most
  $|\psi|$, the result follows. The linear time computability is immediate
  from Lemma~\ref{lemma-frat-aug}.
\end{proof}

%

We make one more observation before proving the normal form:

\begin{lem}\label{lemma-order}
  Let $\cG \in \Cp$. Let $u$ be a node of $\cG$. Let $S$ be all the predecessors of $u$ in
  $\cG$ and set $q=p+\indeg{p}$. Let $\cG' \in \Cq$ be the $(q-p)$-th augmentation
  of $\cG$. There exists a linear order $<$ on $S$ computable from
  $\cG'$, such that for all $v,v' \in S$, $v<v'$ implies $v'=f(v)$ is an edge
  of $\cG'$ for some function $f$ from $\sigq$.
\end{lem}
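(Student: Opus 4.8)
The plan is to exploit the two generating rules of the augmentation, \emph{fraternity} and \emph{transitivity}, in that order. First, since $\cG \in \Cp$ is a recoloring of some $p$-th augmentation $\cG_p$ of a graph in $\C$ and recoloring does not affect edges, we have $\Dm(\cG)=\Dm(\cG_p)\le \indeg p$; hence $m:=|S|\le \indeg p$. Write $\cG=\cH_0,\cH_1,\dots,\cH_r=\cG'$ for the augmentation sequence, where $r=q-p=\indeg p$ and each $\cH_{i+1}$ is the $1$-transitive fraternal augmentation of $\cH_i$; recall every edge of $\cH_i$ survives in $\cH_{i+1}$, so $E^{\cH_i}\subseteq E^{\cG'}$ for all $i\le r$.

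Now observe that any two distinct $v,v'\in S$ both have an edge to $u$ in $\cG=\cH_0$, so by the fraternity rule at least one of $(v,v'),(v',v)$ is an edge of $\cH_1$. Thus $\cH_1$ restricted to $S$ is a tournament. By Rédei's theorem every tournament has a Hamiltonian directed path, so we may enumerate $S$ as $v_1,\dots,v_m$ so that $(v_{i+1},v_i)$ is an edge of $\cH_1$ for each $i<m$ (take a Hamiltonian path and label its vertices in reverse order of traversal). Define $v_i<v_j$ iff $i<j$. For $i<j$ there is then a directed path $v_j\to v_{j-1}\to\cdots\to v_i$ of length $j-i$ in $\cH_1$, and a standard path-halving argument using transitivity shows that $(v_j,v_i)$ becomes an edge of the $(1+\lceil\log_2(j-i)\rceil)$-th augmentation of $\cG$: one augmentation step applied to a directed path of length $\ell$ yields (via transitivity on consecutive edges) a directed path of length $\lceil \ell/2\rceil$ between the same endpoints, and iterating $\lceil\log_2\ell\rceil$ times leaves a single edge. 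Since $j-i\le m-1\le \indeg p-1$ and $1+\lceil\log_2(\indeg p-1)\rceil\le \indeg p=r$ (trivial when $\indeg p\le 1$, and immediate from $\lceil\log_2(N-1)\rceil\le N-1$ otherwise), the edge $(v_j,v_i)$ is present in $\cH_r=\cG'$. By the functional convention, ``$(v_j,v_i)\in E^{\cG'}$'' means exactly ``$v_j$ is a predecessor of $v_i$'', i.e. $v_j=f(v_i)$ for some $f\in\sigq$. Hence $v<v'$ (say $v=v_i$, $v'=v_j$, $i<j$) implies $v'=f(v)$ for some $f\in\sigq$, which is the required property.

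For computability: from $\cG'$ one obtains $S$ by evaluating the (at most $\sigf p$) functions of $\sig p$ at $u$, and one obtains the induced sub-digraph $\cG'[S]$ by testing, for each $v\in S$ and each of the $\sigf q$ functions, whether it maps $v$ into $S$; since $|S|\le\indeg p$ is a constant, all of this, together with an exhaustive search over the constantly many total orders on $S$ for one witnessing the property, runs in constant time. The argument above guarantees such an order exists.

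The main point to get right is the combination in Steps~2--3: it is not true in general that $\cG'[S]$ becomes a complete digraph (a sink-like configuration can leave one orientation permanently absent), so one cannot simply pick any order; the Hamiltonian path fixes a \emph{consistent} decreasing orientation, and iterated transitivity then fills in all the longer ``forward'' pairs — the only genuinely quantitative check being that the allotted $\indeg p$ augmentation steps comfortably exceed the $1+\lceil\log_2(\indeg p-1)\rceil$ steps actually needed. The signature/class bookkeeping (that the $(q-p)$-th augmentation of $\cG$ indeed lies in $\Cq$ and is over $\sigq$) is inherited from the framework set up before the lemma.
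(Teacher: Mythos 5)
Your proof is correct and takes essentially the same route as the paper's: fraternity makes the induced digraph on $S$ in the first augmentation a tournament, a Hamiltonian directed path of that tournament supplies the linear order (the paper constructs it inline by repeatedly extracting a vertex reachable from all others, rather than invoking R\'edei), and iterated transitivity contracts all forward subpaths to single edges within the $\indeg{p}$ allotted augmentation steps. Your explicit path-halving bound of $1+\lceil\log_2(\indeg{p}-1)\rceil$ steps is simply a sharper accounting than the paper's looser ``at most $\indeg{p}-1$ further augmentation steps''.
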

\begin{proof}
  This is because all nodes of $S$ are fraternal and the size of $S$ is at most
  $\indeg{p}$. Hence, after one step of augmentation, all nodes of $S$ are
  pairwise connected and, after at most $\indeg{p}-1$ further augmentation steps, if
  there is a directed path from one node $u$ of $S$ to another node $v$ of $S$,
  then there is also a directed edge from $u$ to $v$. By induction on $|S|$ we show
  that there exists a node $u\in S$ such that for all $v \in S$ there is an
  edge from $v$ to $u$. If $|S|=1$ there is nothing to prove. Otherwise fix
  $v\in S$ and let $S'=S\setminus\set{v}$. By induction we get a $u$ in
  $S'$ satisfying the properties. If there is an edge from $v$ to $u$,
  $u$ also works for $S$ and we are
  done. Otherwise there must be an edge from $u$ to $v$. But then there is a
  path of length 2 from any node of $S'$ to $v$. By transitivity this means
  that there is an edge from any node of $S'$ to $v$ and $v$ is the node we are
  looking for.

  We then set $u$ as the minimal element of our order on $S$ and we repeat this
  argument with $S\setminus\set{u}$.
\end{proof}

Lemma~\ref{lemma-order} justifies the following definition. Let $p$ be a number
and let $q=p+\indeg{p}$. A $p$-type $\tau(x)$ is a quantifier-free formula over
the signature $\sig{q}$ with one free variable $x$ consisting of the
conjunction of a maximal consistent set of clauses of the form $f(g(x))=h(x)$
or $f(x) \neq x$. Given a node $u$ of some graph $\cG$ of $\C_p$, its $p$-type is
the set of clauses satisfied by $u$ in the $(q-p)$-th augmentation $\cG'$ of
$\cG$. From Lemma~\ref{lemma-order} it follows that the $p$-type of $u$ induces a linear order on its
predecessors in $\cG$. Indeed the predecessors of $u$ in $\cG$ can be deduced from the
$p$-type by looking at the clauses $f(x)\neq x$ where $f$ is a function symbol
from $\sig{p}$ and the linear order can be deduced from the clauses
$h(f(x))=g(x)$. Lemma~\ref{lemma-order} guarantees that these latter clauses
induce a linear order. In the sequel we denote this property as ``the $p$-type $\tau$ induces the order
$f_1(x) < f_2(x) < \cdots$'' and for $i<j$ we refer to the $h$ linking $f_i(x)$
to $f_j(x)$ as $h_{i,j}$.

 Note that for a given
$p$ there are only finitely many possible $p$-types and that each of them can
be specified with a conjunctive formula over $\sigq$.

We now state the normal form result.

\begin{prop}\label{prop-normalization}
Let $\phi(\bar x y)$ be a simple quantifier-free query over a recoloring of $\sigp$.
There exists $q$ that depends only on $p$ and $\phi$ and a quantifier-free query $\psi$
over a recoloring of $\sigq$ that is a disjunction of formulas:
\begin{equation}\label{eq-nf}
\psi_1(\bar x) \land \tau(y) \land \Delta^{=}(\bar x y) \land
\Delta^{\neq}(\bar x y),
\end{equation}
where  $\tau(y)$ implies a $p$-type of $y$; $\Delta^{=}(\bar x y)$ is either empty
or contains one clause of the form $y=f(x_i)$ or one clause of the form
$f(y)=g(x_i)$ for some $i$, $f$ and $g$; and $\Delta^{\neq}(\bar x y)$
contains arbitrarily many clauses of the form $y\neq f(x_i)$ or $f(y) \neq
g(x_j)$. Moreover, $\psi$ is such that:

for all $\cG \in \Cp$ there is a $\cG' \in \Cq$ computable in time linear in
$\size{\cG}$ with  $\phi(\cG) = \psi(\cG')$.
\end{prop}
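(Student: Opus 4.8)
The plan is to bring $\phi$ into disjunctive normal form and then process each disjunct separately. Since $\phi$ is simple, every literal of $\phi$ is of one of three kinds: \emph{pure in $\bar x$} (it uses only variables from $\bar x$, each under at most one function symbol, together with the colours), \emph{pure in $y$} (of the form $f(y)=y$, $f(y)=g(y)$, $P(y)$ or $P(f(y))$, possibly negated, with $f,g\in\sigp$), or \emph{mixed} (of the form $y\bowtie f(x_i)$ or $f(y)\bowtie g(x_i)$ with $\bowtie\in\{=,\neq\}$; a literal $x_i\bowtie y$ is read as $y\bowtie x_i$, which we allow as a degenerate instance of the forms $y\bowtie f(x_i)$). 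Fix one disjunct $D$. I would first branch on the $p$-type $\tau_0(y)$ of $y$: there are finitely many of them, and — as made explicit by Lemma~\ref{lemma-order} and the discussion following it — once $\tau_0$ is fixed, the truth value (in any augmentation of level at least $q$) of every pure-in-$y$ function literal is determined; in particular $\tau_0$ records which $f\in\sigp$ satisfy $f(y)=y$ and the linear order on the predecessors of $y$ (those $f(y)$ with $f(y)\neq y$). In the branch for $\tau_0$ I discard $D$ if one of its pure-in-$y$ function literals contradicts $\tau_0$, and otherwise delete all of them; the pure-in-$y$ colour literals I keep, gather into a conjunction $\gamma(y)$, and set $\tau(y):=\tau_0(y)\land\gamma(y)$, which implies the $p$-type $\tau_0$ as required.

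The pure-in-$\bar x$ literals of $D$ are simply accumulated into a formula $\psi_1(\bar x)$; I allow $\psi_1$ to contain composed terms (if a simple $\psi_1$ is desired, one more application of Lemma~\ref{lemma-simple} suffices, at the price of raising $q$). The mixed \emph{dis}equalities are already of the shape permitted for $\Delta^{\neq}$ and are left alone. The work is to reduce the mixed \emph{positive} equalities to at most one. Using $\tau_0$: a literal $f(y)=g(x_i)$ with $\tau_0\models f(y)=y$ is rewritten as $y=g(x_i)$. If after this some literal $y=g(x_i)$ is present, then substituting $y:=g(x_i)$ turns every other mixed positive equality — and, in this branch, also every pure-in-$y$ colour literal — into a pure-in-$\bar x$ literal, which is moved into $\psi_1$, and $y=g(x_i)$ is kept as the single clause of $\Delta^{=}$. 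Otherwise every mixed positive equality has the form $f_k(y)=g_k(x_{i_k})$ with $\tau_0\models f_k(y)\neq y$, i.e.\ pins $g_k(x_{i_k})$ to be a predecessor of $y$; if there are at least two such clauses, say $f(y)=g(x_i)$ and $f'(y)=g'(x_j)$, then $\tau_0$ forces either $f(y)=f'(y)$ — whence $g(x_i)=g'(x_j)$ is a pure-in-$\bar x$ equality — or an equation $h(f(y))=f'(y)$ with $h\in\sigq$ coming from the order $\tau_0$ induces on the predecessors of $y$ — whence $h(g(x_i))=g'(x_j)$ is a pure-in-$\bar x$ equality; in either case, moving that equality into $\psi_1$ makes one of the two clauses redundant, so it is deleted. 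Iterating, at most one mixed positive equality survives, and it has one of the two forms allowed for $\Delta^{=}$. Taking the disjunction of the formulas $\psi_1(\bar x)\land\tau(y)\land\Delta^{=}(\bar x y)\land\Delta^{\neq}(\bar x y)$ obtained from all pairs (disjunct of $\phi$, choice of $\tau_0$) yields $\psi$.

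For the quantitative and semantic claims: each rewriting above is either a logical equivalence or an equivalence valid on every graph that is a sufficiently high augmentation of $\cG$, and all augmentations invoked — the $p$-type branching needs level $q=p+\indegp$ (Lemma~\ref{lemma-order}), and the reductions of the mixed equalities use only functions from $\sigq$ — stay at level $q$, which therefore depends only on $p$ (and, through the size of $\psi$, on $\phi$). Let $\cGp$ be the $(q-p)$-th augmentation of $\cG$, with the colouring of $\cG$ carried along; then $\cGp\in\Cq$, it is computable from $\cG$ in time $O(\size{\cG})$ by Lemma~\ref{lemma-frat-aug}, properties (\ref{bbb}) and (\ref{queryholds}) of Section~\ref{subsec-func-defs} give $\phi(\cG)=\phi(\cGp)$, and the chain of equivalences gives $\phi(\cGp)=\psi(\cGp)$.

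The main obstacle is the last reduction — several mixed positive equalities $f_k(y)=g_k(x_{i_k})$. Logically these pin several predecessors of $y$ to elements named from $\bar x$ and cannot all be pushed into $\Delta^{=}$. What rescues us is precisely the bounded-expansion machinery rather than pure logic: the predecessors of a node are fraternal, so one further augmentation step makes them pairwise comparable by directed edges, and this linear order is recorded in the $p$-type of $y$ (Lemma~\ref{lemma-order}), which lets us re-express all but one of the equalities as relations among $\bar x$ alone. The care needed is in the bookkeeping — branching on the $p$-type \emph{before} touching the mixed literals (so that ``$f(y)=y$'' versus ``$f(y)\neq y$'' is already decided), and checking that the clauses that end up in $\psi_1$, in $\tau(y)$, and in $\Delta^{=}$, $\Delta^{\neq}$ really have the prescribed shapes.
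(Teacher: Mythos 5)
Your proof is correct and follows essentially the same route as the paper's: put $\phi$ in DNF, branch over the finitely many $p$-types of $y$, and use the linear order that the $p$-type induces on the predecessors of $y$ (Lemma~\ref{lemma-order}) to rewrite all but one mixed equality $f(y)=g(x_i)$ as a pure-in-$\bar x$ constraint via the functions $h_{i,j}$. Your version merely spells out more of the bookkeeping (the $f(y)=y$ degeneracies, colour literals on $y$, and the case of two equalities pinning the same predecessor) that the paper leaves implicit.
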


\begin{proof}

Set $q$ as given by Lemma~\ref{lemma-order}.
We first put $\phi$ into a disjunctive normal form (\DNF) and in front of each such
disjunct we add a big disjunction over all possible $p$-types of $y$ (recall
that a type can be specified as a conjunctive formula). We deal with each
disjunct separately.

Note that each disjunct is a query over $\sig{q}$ of the form:
\begin{equation*}
\psi_1(\bar x) \land \tau(y) \land \Delta^{=}(\bar x y) \land
\Delta^{\neq}(\bar x y),
\end{equation*}
where all sub-formulas except for $\Delta^{=}$ are as desired.
Moreover, $\psi_1(\bar x)$, $\Delta^{=}(\bar x y)$ and
$\Delta^{\neq}(\bar x y)$ are in fact queries over $\sig{p}$.
At this point $\Delta^{=}$ contains arbitrarily many clauses of
the form $y=f(x_i)$ or $f(y)=g(x_i)$. If it contains at least
one clause of the form $y=f(x_i)$, we can replace each other
occurrence of $y$ by $f(x_i)$ and we are done.

Assume now that $\Delta^{=}$ contains several conjuncts of the form
$f_i(y)=g(x_k)$. Assume wlog that $\tau$ is such that $f_1(y) < f_2(y) <
\cdots$, where $f_1(y),f_2(y), \cdots$ are all the predecessors of $y$
from $\sig{p}$. Let $i_0$ be the smallest index $i$ such that a clause
of the form $f_i(y)=g(x_k)$ belongs to $\Delta^{=}$. We have
$f_{i_0}(y)=g(x_k)$ in $\Delta^{=}$ and recall that $\tau$
specifies for $i<j$ a function $h_{i,j}$ in $\sig{q}$ such that
$h_{i,j}(f_i(y))=f_j(y)$. Then, as $y$ is of type $\tau$, a clause of
the form $f_j(y)=h(x_{k'})$ with $i_0<j$ is equivalent to
$h_{i_0,j}(g(x_k))=h(x_{k'})$.
\end{proof}

\begin{myexa}\label{ex-a-nf}
Let us see what Lemma~\ref{lemma-simple} and
the normalization algorithm do for $p=0$ and some of the
disjuncts of the query of \eA~\ref{ex-a-enum-f}:

In the case of $f(g(x)) = y$ note that by transitivity, in the augmented
graph, this clause is equivalent to one of the form $y=h(x) \wedge P_{f,g,h}(x)$
(this case is handled by Lemma~\ref{lemma-simple}).

Consider now $\exists z~~ f(z) = x \wedge g(z) = y$. It will be convenient to
view this query when $z$ plays the role of $y$ in
Proposition~\ref{prop-normalization}. Notice that in this case it is not in
normal form as $\Delta^=$ contains two elements. However, the two edges $f(z)
= x$ and $g(z) = y$ are fraternal. Hence, after one augmentation step, a new edge is added
between $x$ and $y$ and we either have $y=h(x)$ or $x=h(y)$ for some $h$ in the
new signature.

Let $\tau_{h,f,g}(z)$ be $0$-type stating that $h(f(z))=g(z)$ and $\tau_{h,g,f}(z)$ be $0$-type stating that $h(g(z))=f(z)$.
It is now easy to see that the query $\exists z ~~f(z) = x \wedge g(z) = y$ is equivalent to
\begin{align*}
\exists z \bigvee_{h} &y=h(x) \land \tau_{h,f,g}(z) \land f(z)=x~~~ \vee\\ &x=h(y)
\land \tau_{h,g,f}(z) \land g(z)=y.
\end{align*}
\end{myexa}

\newcommand\witness{\textsc{Witness}\xspace}

\section{Model checking}\label{section-MC}

In this section we show that the model checking problem of \FO over a class of
structures with bounded expansion can be done in time linear in the size of the
structure. This gives a new proof of the result of~\cite{DKT11}.
Recall that by Lemma~\ref{lemma-structures-to-graphs} it is enough to consider
oriented graphs viewed as functional structures.

\begin{thmC}[\cite{DKT11}]\label{theo-model-check}
  Let \C be a class of graphs with bounded expansion and let $\psi$ be a
  sentence of \FO\@. Then, for all $\cG\in\C$, testing whether
  $\cG \models \psi$ can be done in time $O(\size{\cG})$.
\end{thmC}

The proof of Theorem~\ref{theo-model-check} is done using a quantifier
elimination procedure: given a query $\psi(\bar x)$ with at least one free
variable we can compute a quantifier-free query $\phi(\bar x)$ that is
``equivalent'' to $\psi$. Again, the equivalence should be understood modulo
some augmentation steps for a number of augmentation steps depending only on \C
and $|\psi|$. When starting with a sentence $\psi$ we end-up with $\phi$ being
a boolean combination of formulas with one variable. Those can be easily tested
in linear time in the size of the augmented structure, which in turn can be
computed in time linear from the initial structure by
Lemma~\ref{lemma-frat-aug}. The result follows. We now state precisely the quantifier elimination
step:

\begin{prop}\label{prop-quant-elim}
  Let \C be a class of graphs with bounded expansion witnessed by the
  function $\indegname$. Let $\psi(\bar x y)$ be a quantifier-free
  formula over a recoloring of $\sigp$. Then one can compute a $q$ and
  a quantifier-free formula $\phi(\bar x)$ over a recoloring of
  $\sigq$ such that:

  for all $\cG\in \Cp$ there is an $\cG' \in \Cq$ such that:
\begin{equation*}
\phi(\cG') = (\exists y \psi)(\cG)
\end{equation*}
Moreover, $\cG'$ is computable in time $O(\size{\cG})$.
\end{prop}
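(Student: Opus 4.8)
The plan is to reduce to the normal form of Proposition~\ref{prop-normalization} and then eliminate the single quantifier $\exists y$ by a case analysis on the shape of the disjuncts. First I would apply Lemma~\ref{lemma-simple} to replace $\psi$ by a simple query (paying $|\psi|$ augmentation steps), and then Proposition~\ref{prop-normalization} to write it, after a further bounded number of augmentation steps, as a disjunction of formulas of the form $\psi_1(\bar x)\land\tau(y)\land\Delta^{=}(\bar x y)\land\Delta^{\neq}(\bar x y)$ as in~(\ref{eq-nf}). Since $\exists y$ distributes over disjunction, it suffices to eliminate $y$ from a single such disjunct; the number of disjuncts and the arities of the signatures involved are all bounded in terms of $p$ and $|\psi|$, so the overall $q$ is a function of $p$ and $|\psi|$ only, as required.

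The core of the argument is the case analysis on $\Delta^{=}$. If $\Delta^{=}$ contains a clause $y=f(x_i)$, then $y$ is determined by $\bar x$: substitute $f(x_i)$ for $y$ everywhere, and the disjunct becomes a quantifier-free formula in $\bar x$ (one must also check that $f(x_i)$ realizes the type $\tau$ and satisfies the $\Delta^{\neq}$ clauses, but these are all quantifier-free conditions on $\bar x$). If $\Delta^{=}$ contains a clause $f(y)=g(x_i)$, then $y$ is a predecessor — under $f$ — of the element $g(x_i)$. Here I would use the key trick of the functional/augmented setting: $g(x_i)$ has only boundedly many predecessors (at most $\indeg{p}$ many), and by Lemma~\ref{lemma-order} after a few more augmentation steps these predecessors are linearly ordered by edges of $\cG'$, so ``$y$ is a predecessor of $g(x_i)$'' can be spelled out without a quantifier: it is a finite disjunction over the positions $j$ of ``$y=f_j(g(x_i))$'', i.e. over new function symbols from $\sigq$ applied to $g(x_i)$. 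Substituting each such term for $y$ again kills the quantifier. If $\Delta^{=}$ is empty, then $y$ is constrained only by $\tau(y)$ and the $\Delta^{\neq}$ inequalities $y\neq f(x_i)$, $f(y)\neq g(x_j)$; here $\exists y$ holds iff there is a node of type $\tau$ avoiding the boundedly many forbidden values, and this can be decided by a colour: after one more augmentation step, recolour each node $x_i$ with, for every relevant combination of type $\tau$ and function tuples, a Boolean flag recording whether $x_i$ has a ``nearby'' node of type $\tau$ forbidden by one of the inequality clauses; counting modulo the (bounded) number of forbidden values lets one decide existence with a quantifier-free formula over the recoloured $\cG'$. In all cases the resulting $\phi(\bar x)$ is quantifier-free over a recoloring of $\sigq$, and $\cG'$ is obtained from $\cG$ by finitely many augmentations plus a recoloring, hence computable in time $O(\size{\cG})$ by Lemma~\ref{lemma-frat-aug} and Lemma~\ref{lemma-structures-to-graphs}-style bookkeeping.

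I expect the main obstacle to be the empty-$\Delta^{=}$ case: when $y$ is ``free-floating'' subject only to its type and to avoiding a bounded set of forbidden neighbours, one cannot name $y$ by a term in $\bar x$, so the quantifier must be absorbed into new colours computed during preprocessing. The delicate point is to make the recoloring depend only on $\tau$, the function symbols occurring in $\Delta^{\neq}$, and the number of forbidden values — all bounded in terms of $p$ and $|\psi|$ — so that the number of new colours, and hence $q$, stays bounded; and to check that the needed counting information (how many type-$\tau$ witnesses lie at the relevant bounded ``radius'' of each $x_i$, in the augmented graph) is indeed computable in linear time. The substitution cases are essentially routine once Proposition~\ref{prop-normalization} and Lemma~\ref{lemma-order} are in place.
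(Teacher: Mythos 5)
Your setup (make the query simple, normalize via Proposition~\ref{prop-normalization}, distribute $\exists y$ over the disjunction, then do a case analysis on $\Delta^{=}$) matches the paper, and the case $\Delta^{=}=\{y=f(x_i)\}$ is handled correctly by substitution. But the case $\Delta^{=}=\{f(y)=g(x_i)\}$ — which is the heart of the proof — is handled with a reversed reading of the functional encoding. In this paper the functions map a node to its \emph{predecessors}: $f(u)=v$ means there is an edge from $v$ to $u$, and it is the \emph{in}-degree that is bounded. So the clause $f(y)=g(x_i)$ does not say that $y$ is one of boundedly many predecessors of $g(x_i)$; it says that $y$ lies in the preimage $f^{-1}(g(x_i))$, i.e.\ $y$ is a \emph{successor} of $g(x_i)$, and there may be unboundedly many such $y$ (the functions are not invertible — the paper explicitly contrasts this with the bounded-degree case where they are permutations). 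Consequently $y$ cannot be named by a term $f_j(g(x_i))$ over $\bar x$, and your finite disjunction over positions $j$ does not exist. Lemma~\ref{lemma-order} orders the predecessors of a fixed node, not its preimage under a function, so it does not rescue this step.

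The missing idea is the paper's $\witness$ construction: for each node $w$ one precomputes, in a single linear scan, a set $\witness(w)\subseteq f_{i_0}^{-1}(w)$ of \emph{bounded} size whose elements are chosen to pairwise differ on enough of their predecessors (organized as a bounded-depth, bounded-width tree) that only $O(\sigfp^2|\bar x|)$ of them can be killed by the inequality constraints of $\Delta^{\neq}$; membership and position in $\witness(w)$ are then recorded by new colors $P_i$, $Q_i$, $P_{i,h,h'}$, and $\exists y$ is replaced by a bounded disjunction over these colors. Your empty-$\Delta^{=}$ case has the same difficulty in milder form: a clause $f(y)\neq g(x_j)$ does not exclude boundedly many candidate values of $y$ but a whole preimage, so "counting modulo the bounded number of forbidden values" does not directly work; one again needs witnesses that are spread out across distinct predecessor values so that each inequality can eliminate only boundedly many of them. (The paper in fact treats the empty case as the \emph{easy} degenerate instance of the witness argument, with a single global witness set, whereas you identified it as the main obstacle and the $f(y)=g(x_i)$ case as routine — the difficulty is distributed the other way around.)
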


\begin{proof}
  In view of Lemma~\ref{lemma-simple} we can assume that $\psi$ is simple.
  We then apply Proposition~\ref{prop-normalization} to $\psi$ and
  $p$ and obtain a $q$ and an equivalent formula in DNF, where each disjunct has the
  special form given by~\eqref{eq-nf}. As disjunction and existential
  quantification commute, it is enough to treat each part of the disjunction
  separately.

  We thus assume that $\psi(\bar x y)$ is a quantifier-free conjunctive
  formula over a recoloring of $\sigq$ of the form~\eqref{eq-nf}:
\begin{equation*}
\psi_{1}(\bar x) \land \tau(y) \land \Delta^{=}(\bar x y) \land
\Delta^{\neq}(\bar x y).
\end{equation*}


Let's assume that the $p$-type $\tau$ satisfied by $y$ enforces $f_1(y) < f_2(y) <
\cdots$, where $f_1(y),f_2(y), \cdots$ are all the images of $y$ by a function
from $\sigp$ such that $f_i(y) \neq y$. Moreover, for each $i<j$, $\tau$ contains an atom of the form
$h_{i,j}(f_i(y))=f_j(y)$ for some function $h_{i,j}\in\sig{q}$.

We do a case analysis depending on the value of $\Delta^=$.
\begin{itemize}
    \item If $\Delta^=$ is $y=g(x_k)$ for some function $g$ and some $k$, then we replace
$y$ with $g(x_k)$ everywhere in $\psi(\bar x y)$ resulting in a formula
$\phi(\bar x)$ having obviously the desired properties.
    \item Assume now that $\Delta^=$ is of the form $f(y)=g(x_k)$. Without loss
of generality we can assume that $f$ is $f_{i_0}$ and $k=1$. In other words
$\Delta^=$ contains only the constraint $f_{i_0}(y)=g(x_1)$.

The general idea is to limit the quantification on $y$ to a finite set (whose size depends only on $\C$ and $\psi$),
depending only on $x_1$. We then encode these sets using suitable extra colors.
To do this, for each node $w$ we first compute a set $\witness(w)$ such that for each
tuple $\bar v$ we have $\cG_q\models \exists y~~ \psi(\bar v y)$ iff $\cG_q \models
\exists y \in \witness(g(v_1))~~ \psi(\bar v y)$. Moreover, for all $w$, $|\witness(w)|\leq N$ where $N$ is a number
depending only on $p$. We then encode these witness sets using suitable extra colors.

The intuition behind the $\witness$ set is as follows. Assume first that
$\Delta^{\neq}$ is empty. Then we only need to test the existence of $y$ such
that $f_{i_0}(y)=g(x_1)$. To do so, we scan through all nodes $u$, test if
$\tau(u)$ holds and if so we add $u$ to $\witness(f_{i_0}(u))$ if this set is
empty and do nothing otherwise. Clearly each witness set has size at most one
and the desired properties are verified. Moreover if $v$ is in
$\witness(g(x_1))$ then $f_{i_0}(v)=g(x_1)$. Therefore it is then enough to
color with a new color red all nodes having a non-empty witness set and
$\exists y ~ \tau(y) \land f_{i_0}(y)=g(x_1)$ is then equivalent to
$\text{red}(g(x_1))$.

The situation is slightly more complicated if $\Delta^{\neq}$ is not
empty. Assume for instance that $\Delta^{\neq}$ contains only constraints of
the form $y\neq h(x_k)$. Then the previous procedure does not work because
$\witness(g(x_1))$ may be such that it is equal to $h(x_k)$. However
there are only $c$ nodes that we need to avoid, where $c$ depends only on the
formula, hence if $\witness(g(x_1))$ contains at least $c+1$ nodes we are sure
that at least one of them will satisfy all the inequality constraints. We
implement this by scanning through all nodes $u$, test if $\tau(u)$ holds and
if so we add $u$ to $\witness(f_{i_0}(u))$ if this set has a size smaller or
equal to $c$ do nothing otherwise. The difficulty is to encode this set into
the formula. If the witness set is of size $c+1$ one of its element must make
all inequalities true hence a new color as before does the job. When the set
has a smaller size we need to test each of its elements against the
inequalities.  For this we introduce a predicates $Q_i$, and add a node $u$ to
$Q_i$ if $u$ has been added as the $i^{th}$ element in $\witness(f_{i_0}(u))$.
As before any element $y$ in $\witness(g(x_1))$ is such that
$f_{i_0}(y)=g(x_1)$. It remains to test whether the $i^{th}$ such element
satisfies $y \neq h(x)$. In other words whether $h(x)$ is the $i^{th}$ witness
of $g(x_1)$ or not. It is easy to check that the $i^{th}$ witness of $g(x_1)$
is $h(x)$ iff $Q_i(h(x_k)) \land
f_{i_0}(h(x_k))=g(x_1)$.

The general case, when $\Delta^{\neq}$ contains also clauses of the form
$h_1(y) \neq h_2(x_k)$ is more complex and require an even bigger witness set
but this is essentially what we do.

\paragraph{Computation of the Witness function}

We start by initializing $\witness(v)=\emptyset$ for all $v$.

We then successively investigate all nodes $u$ of $\cG_q$ and do the
following.  If $\cG_q \models \lnot\tau(u)$ then we move on to the
next $u$. If $\cG_q \models \tau(u)$ then let $u_1,\dots, u_l$ be the
current value of $\witness(f_{i_0}(u))$ --- if $\witness(f_{i_0}(u))$ is empty then
we add $u$ to this set and move on to the next node of $\cG_q$.

Let $\beta_p$ be $\sigfp (\sigfp+1)|\bar x|+1$.

Let $i$ be minimal such that there exists $j$ with $f_i(u_j) = f_i(u)$ (notice
that $i\leq i_0$). Note that because
$f_j(w)=h_{i,j}(f_i(w))$ for all $w$ verifying $\tau$ and all $j>i$, this
implies that $u$ and $u_j$ agree on each $f_j$ with $j \geq i$ and disagree on
each $f_j$ with $j<i$.

Let $S_i=\set{f_{i-1}(u_j)
  ~|~ f_i(u_j)=f_i(u)}$, where $f_0(u_j)$ is $u_j$ in the case where $i=1$. If
$|S_i| < \beta_p$ then we add $u$ to $\witness(f_{i_0}(u))$.

\paragraph{Analysis of the Witness function}
Clearly the algorithm computing the witness function runs in linear time.

Moreover, for each node $v$, $\witness(v)$ can be represented as the leaves of a tree of
depth at most $\sigfp$ and of width $\beta_p$. To see this, notice that all nodes $u$ of $\witness(v)$ are such that
$f_{i_0}(u)=v$. Note also that if two nodes $u$ and $u'$ satisfying $\tau$
share a predecessor, $f_i(u)=f_i(u')$, then for all $j>i$, $u$ and $u'$ agree
on $f_j$ as $f_j=h_{i,j} \circ f_i$ for all nodes
satisfying $\tau$. The depth of the least common ancestor of two nodes $u$ and $u'$ of
$\witness(v)$ is defined as the least $i$ such that $u$ and $u'$ agree on
$f_i$. One can then verify that by construction of $\witness(v)$ the tree has
the claimed sizes. Hence the size of $\witness(v)$ is
bounded by $\beta_p^{\sigfp+1}$.

We now show that for each tuple $\bar v$ and each node $u$ such that $\cG_q
\models \psi(\bar v u)$ there is a node $u'$ in $\witness(g(v_1))$ such that
$\cG_q \models \psi(\bar v u')$.

To see this assume $\cG_q \models \psi(\bar v u)$. If $u\in \witness(g(v_1))$ we are
done. Otherwise note that $f_{i_0}(u)=g(v_1)$ and that
$\cG_q \models \tau(u)$.
Let $i$ and $S_i$ be as described in the algorithm when
investigating $u$. As $u$ was not added to $\witness(f_{i_0}(u))$, we must have
$|S_i| > \beta_p$. Let $u_{1},\dots,u_{\beta_p}$ be the
elements of $\witness(g(v_1))$ providing $\beta_p$ pairwise different values for
$f_{i-1}$.  Among these, at most $\sigfp |\bar v|$ of them may be of the form
$f_j(v_l)$ for some $j$ and $l$ as each $v_l$ has at most $\sigfp$ predecessors.
Notice that for all $j>i$ and all $\ell$, $u$ agrees with $u_\ell$ on $f_i$ and therefore
they also agree on $f_j$ for $j>i$ as $f_j=h_{i,j} \circ f_i$ for all nodes
satisfying $\tau$. When $j<i$ the values of $f_j(u_\ell)$ and $f_j(u_{\ell'})$ must be
different if $\ell\neq \ell'$ as otherwise $u_\ell$ and $u_{\ell'}$ would also agree on
$f_{i-1}$ as $f_{i-1}=h_{j,{i-1}} \circ h_{i-1}$ for all nodes satisfying $\tau$.
Therefore, for each $\ell$ and each $j<i$ there are at most $\sigfp$
of the $u_\ell$ such that $f_j(u_\ell)$ is a predecessor of $v_l$.

Altogether, at most $\sigfp^2|\bar v| + \sigfp|\bar v|$ nodes $u_\ell$ may falsify an
inequality constraint. As $\beta_p$ is strictly bigger than that, one of the
$u_\ell$ is the desired witness.

\paragraph{Recoloring of $\cG_q$}

Based on \witness we recolor $\cG_q$ as follows. Let $\gamma_p={(\beta_p +
1)}^{\sigfp+1}$.  For each $v\in \cG_q$, the $i^{th}$ witness of $v$ is the
$i^{th}$ element inserted in  $\witness(v)$ by the algorithm.

For each $i \leq \gamma_p$ we introduce a new unary predicate $P_i$ and for each
$u\in \cG_q$ we set $P_i(u)$ if $\witness(u)$ contains at least $i$
elements.

For each $i\leq \gamma_p$, we introduce a new unary predicate $Q_{i}$ and for
each $v\in \cG_q$ we set $Q_{i}(v)$ if the $i^{th}$ witness of $f_{i_0}(v)$ is
$v$.

For each $i\leq \gamma_p$ and each $h,h'\in \sigfq$ we introduce a new unary
predicate $P_{i,h,h'}$ and for each $v\in \cG_q$ we set $P_{i,h,h'}(v)$ if the
$i^{th}$ witness of $h(v)$ is an element $u$ with $h'(u)=v$.

We denote by $\cG'$ the resulting graph and notice that it can be
computed in linear time from $\cG$.


\paragraph{Computation of $\phi$}

We now replace $\psi(\bar x,y)$ by the following formula:
\begin{equation*}
\bigvee_{i\leq \gamma_p} \psi_{1}(\bar x) \land \psi^i(\bar x)
\end{equation*}
where $\psi^i(\bar x)$ checks that the $i^{th}$ witness of $g(x_1)$ makes the
initial formula true.

Notice that if $y$ is the $i^{th}$ witness of $g(x_1)$ then
$f_{i_0}(y)=g(x_1)$. Hence the equality
$f_j(y) = h(x_k)$ with $j<i_0$ is equivalent over $\cG'$ to
$h_{j,i_0}(h(x_k))=g(x_1) \land P_{i,h_{j,i_0},f_j}(h(x_k))$
 and the equality
$y = h(x_k)$ is equivalent over $\cG'$ to $f_{i_0}(h(x_k))=g(x_1) \land
Q_{i}(h(x_k))$. From the definition of $p$-type, the equality
$f_j(y) = h(x_k)$ with $j>i_0$ is equivalent to $h_{i_0,j}(g(x_1)) = h(x_k)$.

This implies that $\psi^i(\bar x)$ can be defined as
{
\begin{eqnarray*}
 P_i(g(x_1)) &\land& \!\!\!\!\!\!\!\!\!\!\!\!\!\!\!\bigwedge_{\substack{f_j(y) \neq h(x_k) \in \Delta^{\neq}\\ j<i_0}}  \!\!\!\!\!\!\!\!\!\!\!\!\!\lnot
 \big(h_{j,i_0}(h(x_k))=g(x_1) \land
 P_{i,h_{j,i_0},f_j}(h(x_k))\big)\\
 &\land& \bigwedge_{\substack{f_j(y) \neq h(x_k) \in \Delta^{\neq}\\j \geq i_0}}
 h_{i_0,j}(g(x_1)) \neq h(x_k)\\
 &\land& \bigwedge_{y \neq h(x_k) \in \Delta^{\neq}}  \!\!\!\!\!\!\!\!\!\lnot
\big(f_{i_0}(h(x_k))=g(x_1) \land Q_{i}(h(x_k))\big).
\end{eqnarray*}
}


    \item It remains to consider the case when $\Delta^=$ is empty. This is a simpler version of the
previous case, only this time it is enough to construct a set $\witness$
which does not depend on $v$.
It is constructed as in the previous case and verifies:
for all tuples $\bar v$ of $\cG_q$, if $\cG_q\models \psi(\bar v u)$
for some node $u$, then there is a node $u'\in \witness$ such that
$\cG_q \models \psi(\bar v u')$. Moreover, $|\witness|\leq \gamma_p$. We then
argue as in the previous case.
\qedhere
\end{itemize}
\end{proof}

\begin{myexa}\label{ex-a-qelim}
Consider one of the quantified formulas as derived by \eA~\ref{ex-a-nf}:
\begin{equation*}
\exists z~~y=h(x) \land \tau_{h,f,g}(z) \land f(z) = x.
\end{equation*}
The resulting quantifier-free query has the form:
\begin{equation*}
P(x) \wedge h(x) = y
\end{equation*}
where $P(x)$ is a newly introduced color saying ``$\exists z~~ \tau_{h,f,g}(z) \land f(z) = x$''.
The key point is that this new predicate can be computed in linear time by iterating through all
nodes $z$, testing whether $\tau_{h,f,g}(z)$ is true and, if this is the case,
coloring $f(z)$ with color $P$.
\end{myexa}

%

Applying the quantifier elimination process from inside out using
Proposition~\ref{prop-quant-elim} for each step and then applying
Lemma~\ref{lemma-simple} to the result yields:

\begin{thm}\label{thm-quantifier-elim}
Let \C be a class of graphs with bounded expansion. Let $\psi(\bar{x})$ be a query of \FO over a recoloring of $\sig{0}$ with at least
one free variable. Then one can compute a $p$ and a simple
quantifier-free formula $\phi(\bar x)$ over a recoloring of $\sigp$ such that:

  $\forall \cG\in\C$, we can construct in time $O(\size{\cG})$
  a graph $\cG' \in \Cp$ such that
\begin{equation*}
 \phi(\cG') = \psi(\cG).
\end{equation*}
\end{thm}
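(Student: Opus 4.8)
The plan is to prove Theorem~\ref{thm-quantifier-elim} by induction on the quantifier rank of $\psi(\bar x)$, using Proposition~\ref{prop-quant-elim} as the engine for peeling off one quantifier at a time. First I would observe that it suffices to handle formulas in prenex form, or more simply, to induct on the structure of $\psi$ directly: since all our transformations commute with Boolean connectives (the recolorings only add predicates, and Lemma~\ref{lemma-frat-aug} together with point~\ref{queryholds} of the augmentation properties guarantees that a query over $\sig{i}$ evaluated on $\cG_i$ gives the same answer as on any later $\cG_j$), a Boolean combination of quantifier-free formulas over recolorings of various $\sig{p_k}$ can be viewed uniformly as a quantifier-free formula over a recoloring of $\sig{p}$ for $p = \max_k p_k$. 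So the only real work is eliminating a single leading existential quantifier.

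The key steps, in order, would be: (1) Base case: if $\psi$ is already quantifier-free over a recoloring of $\sig{0}$, then by Lemma~\ref{lemma-simple} we obtain an equivalent \emph{simple} quantifier-free $\phi$ over a recoloring of $\sig{q}$ with $q=|\psi|$, computable in linear time, and we are done with $p=q$ and $\cG'=\cG_q$. (2) Inductive step for $\exists y\,\psi'(\bar x y)$: by the induction hypothesis applied to $\psi'$ (which has smaller quantifier rank, treating $y$ as a free variable), there is a $p_1$ and a simple quantifier-free $\phi_1(\bar x y)$ over a recoloring of $\sig{p_1}$, together with a linearly-computable $\cG_1 \in \C_{p_1}$, such that $\phi_1(\cG_1) = \psi'(\cG)$. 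Then apply Proposition~\ref{prop-quant-elim} to $\phi_1$ and $p_1$: this yields a $q$ and a quantifier-free $\phi_2(\bar x)$ over a recoloring of $\sig{q}$ and a linearly-computable $\cG_2 \in \C_q$ with $\phi_2(\cG_2) = (\exists y\,\phi_1)(\cG_1) = (\exists y\,\psi')(\cG)$. (3) Since Proposition~\ref{prop-quant-elim} may reintroduce compositions $f(g(x))$, apply Lemma~\ref{lemma-simple} once more to $\phi_2$ to get a simple quantifier-free $\phi(\bar x)$ over a recoloring of $\sig{p}$ for $p = q + |\phi_2|$, with the corresponding linearly-computable $\cG' \in \C_p$. (4) For the negation and conjunction/disjunction cases, push negations in, apply the induction hypothesis to each immediate subformula, and unify the augmentation levels as described above; a recoloring of a recoloring is again a recoloring, and adding finitely many colors does not affect bounded expansion (as noted in the paper).

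I expect the main subtlety — not so much an obstacle as a bookkeeping point requiring care — to be the management of the augmentation levels and the recolorings through the composition of these steps. Each invocation of Proposition~\ref{prop-quant-elim} or Lemma~\ref{lemma-simple} bumps the augmentation index by an amount depending on the current formula size and on $\indegname$, and each produces a new recoloring; one must check that at every stage the graph produced lies in the appropriate $\C_p$ and is still computable in time $O(\size{\cG})$ from the original $\cG$ — this follows because linear-time computable composed with linear-time computable is linear-time computable, and because by Lemma~\ref{lemma-frat-aug} every $\cG_i$ is obtainable in linear time directly from $\graph$, so the intermediate graphs $\cG_1, \cG_2$ need never be materialized separately. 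One must also verify that when combining subformulas at different levels the final bound $p$ depends only on $|\psi|$ and $\C$ (via $\indegname$), which is clear since the recursion has depth at most $|\psi|$ and each level multiplies the index growth by a function of $\indegname$. The genuinely hard content has already been isolated into Proposition~\ref{prop-quant-elim} (the \witness-set construction) and Proposition~\ref{prop-normalization}; the present theorem is their routine iteration.
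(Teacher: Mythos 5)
Your proposal is correct and follows essentially the same route as the paper, whose proof of this theorem is precisely the one-line recipe you spell out: eliminate quantifiers from the inside out by iterating Proposition~\ref{prop-quant-elim}, and apply Lemma~\ref{lemma-simple} to restore simplicity (the paper does this once at the end, you do it after each step — an immaterial difference since Proposition~\ref{prop-quant-elim} itself begins by invoking Lemma~\ref{lemma-simple}). Your bookkeeping of augmentation levels, recolorings of recolorings, and the composition of linear-time steps is exactly the implicit content the paper leaves to the reader.
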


We will make use of the following useful consequence of
Theorem~\ref{thm-quantifier-elim}:

\begin{cor}\label{testing-FO}
  Let \C be a class of graphs with bounded expansion and let $\psi(\bar{x})$ be a formula of \FO over $\sig{0}$ with at least one free variable.
  Then, for all $\cG\in\C$, after a preprocessing in time $O(\size{\cG})$,
  we can test, given $\bar u$ as input, whether $\cG \models \psi(\bar u)$
  in constant time.
\end{cor}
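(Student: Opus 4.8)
The plan is to derive Corollary~\ref{testing-FO} directly from Theorem~\ref{thm-quantifier-elim} together with the linear-time computability of augmentations. First I would handle the trivial case where $\psi$ is a sentence, or rather observe that the hypothesis guarantees at least one free variable, so Theorem~\ref{thm-quantifier-elim} applies: we obtain a number $p$ and a simple quantifier-free formula $\phi(\bar x)$ over a recoloring of $\sig{p}$ such that for every $\cG \in \C$ we can construct, in time $O(\size{\cG})$, a graph $\cG' \in \Cp$ with $\phi(\cG') = \psi(\cG)$. The preprocessing phase of the algorithm consists precisely of computing this $\cG'$; by the theorem this takes time $O(\size{\cG})$, and note that $\size{\cG'} = O(\size{\cG})$ as well since a recoloring of the $p$-th augmentation adds only boundedly many colors and boundedly many function symbols (the bounds depending only on $\C$, $p$, hence only on $\C$ and $|\psi|$).

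Next I would argue that, given the preprocessed structure $\cG'$ and an input tuple $\bar u$, deciding whether $\cG \models \psi(\bar u)$ amounts to deciding whether $\cG' \models \phi(\bar u)$, which can be done in constant time. The point is that $\phi$ is simple and quantifier-free: it is a Boolean combination of atoms of the form $f(x_i) = g(x_j)$, $f(x_i) = x_j$, $x_i = x_j$, and unary-predicate atoms $P(x_i)$ or $P(f(x_i))$, where all the function and predicate symbols come from the fixed finite signature of $\Cp$. Each such atom can be evaluated on $\bar u$ in constant time: a function application $f_k^{\cG'}(u_i)$ is a single array lookup in the functional representation, a composition like $g(f(x_i))$ needs a bounded number of such lookups since $\phi$ is simple (so nesting depth is bounded), an equality test between two domain elements is a single comparison, and membership of an element in a unary predicate $P$ is again a single lookup provided the preprocessing stores each $P^{\cG'}$ as a characteristic-vector array indexed by the domain. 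Since the number of atoms in $\phi$ and the shape of the Boolean combination depend only on $\psi$ (and $\C$), the whole evaluation is a constant number of constant-time operations.

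The one genuinely technical point — and the step I would single out as needing a sentence of care — is ensuring that the index structure produced by the preprocessing supports these constant-time lookups. Computing $\cG'$ in linear time is provided by Theorem~\ref{thm-quantifier-elim} (which itself rests on Lemma~\ref{lemma-frat-aug} and Proposition~\ref{prop-quant-elim}), but we must also spend the linear-time budget laying out, for each function symbol $f$ of $\sig{p}$, an array giving $f^{\cG'}(v)$ for every vertex $v$, and for each color $P$ an array giving the truth value of $P(v)$ for every $v$; all of this is clearly doable in time $O(\size{\cG'}) = O(\size{\cG})$ and is exactly the kind of representation already assumed for functional structures in Section~\ref{subsec-func-defs}. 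With the RAM model and uniform cost measure this gives genuine $O(1)$ access. I would close by remarking that no extra memory beyond the precomputed index is needed for a single membership test, matching the constant-time claim. Everything else is routine, so I would keep the proof of the corollary to a short paragraph invoking Theorem~\ref{thm-quantifier-elim} and the observation that evaluating a fixed simple quantifier-free formula on a given tuple is constant time.
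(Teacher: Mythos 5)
Your proposal is correct and follows essentially the same route as the paper: apply Theorem~\ref{thm-quantifier-elim} to reduce to a simple quantifier-free formula, precompute in linear time a per-node index (the paper uses a list $L(v)$ of satisfied predicates and function images, you use equivalent arrays), and then evaluate each atom in constant time in the RAM model. The only cosmetic difference is that you allow for bounded compositions $g(f(x_i))$, which a \emph{simple} formula by definition does not contain, so that case is vacuous but harmless.
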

\begin{proof}
  By Theorem~\ref{thm-quantifier-elim} it is enough to consider quantifier-free
  simple queries. Hence it is enough to consider a query consisting in a single
  atom of either $P(x)$ or $P(f(x))$ or $x=f(y)$ or $f(x)=g(y)$.

  During the preprocessing phase we associate to each node $v$ of the input
  graph a list $L(v)$ containing all the predicates satisfied by $v$  and all
  the images of $v$ by a function symbol from the signature. This can be
  computed in linear time by enumerating all relations of the database and
  updating the appropriate lists with the corresponding predicate or the
  corresponding image.

  Now, because we use the RAM model, given $u$ we can in constant time recover
  the list $L(u)$. Using those lists it is immediate to check all atoms of the
  formula in constant time.
\end{proof}

Theorem~\ref{theo-model-check} is a direct consequence of
Theorem~\ref{thm-quantifier-elim} and Corollary~\ref{testing-FO}: Starting with
a sentence, and applying Theorem~\ref{thm-quantifier-elim} for eliminating
quantifiers from inside out we end up with a Boolean combination of formulas
with one variable. Each such formula can be tested in $O(\size{\cG})$ by
iterating through all nodes $v$ of $\cG$ and in constant time (using
Corollary~\ref{testing-FO}) checking if $v$ can be substituted for the
sole existentially quantified variable.

On top of Theorem~\ref{theo-model-check} the following corollary is immediate
from Theorem~\ref{thm-quantifier-elim} and Corollary~\ref{testing-FO}:

\begin{cor}\label{unary-FO}
  Let \C be a class of graphs with bounded expansion and let $\psi(x)$ be a
  formula of \FO over $\sig{0}$ with one free variable. Then, for all
  $\cG\in\C$, computing the set $\psi(\cG)$ can be done in time $O(\size{\cG})$.
\end{cor}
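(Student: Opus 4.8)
The plan is to read the statement off Theorem~\ref{thm-quantifier-elim} and Corollary~\ref{testing-FO}, adding only the bookkeeping needed to actually list the answers. First I would apply Theorem~\ref{thm-quantifier-elim} to $\psi(x)$: since $\psi$ has exactly one free variable it returns a number $p$ and a \emph{simple} quantifier-free formula $\phi(x)$ over a recoloring of $\sigp$, together with, for each $\cG\in\C$, a graph $\cG'\in\Cp$ computable in time $O(\size{\cG})$ with $\phi(\cG')=\psi(\cG)$. So it is enough to enumerate the set $\phi(\cG')$ in time $O(\size{\cG})$. Here one should note that $\size{\cG'}=O(\size{\cG})$: by Lemma~\ref{lemma-frat-aug} the augmentations $\cG_0\subseteq\cG_1\subseteq\cdots\subseteq\cG_p$ all have in-degree bounded by a constant depending only on $\C$, so the number of edges stays linear, and passing to a recoloring only adds a bounded number of unary predicates (see Section~\ref{subsec-func-defs}).

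Next, because $\phi$ is simple and quantifier-free with the single free variable $x$, it is a Boolean combination of atoms of the form $P(x)$, $P(f(x))$ and $f(x)=g(x)$ (with $f(x)=x$ as a special case) over the signature of $\cG'$. This is exactly the class of atoms handled in the proof of Corollary~\ref{testing-FO}, so I would reuse that argument for the recolored signature $\sigp$: during a preprocessing phase, scan all relations of $\cG'$ once and attach to each node $v$ a constant-size record $L(v)$ listing the unary predicates holding at $v$ and the image $f(v)$ for every function symbol $f$. This costs $O(\size{\cG'})=O(\size{\cG})$, and afterwards, using the RAM model, for any node $v$ the truth of each atom of $\phi$, hence of $\phi(v)$, is decidable in constant time.

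Finally I would run one loop over all nodes $v$ of $\cG'$ --- which are the nodes of $\cG$ --- testing $\phi(v)$ in constant time and outputting $v$ whenever it holds; this loop takes time $O(\size{\cG})$, and since $\psi(\cG)=\phi(\cG')$ has at most $|\cG|\le\size{\cG}$ elements the output itself fits in the same bound. There is no genuine obstacle in this argument; the only points that need a word of justification are that the reduction to $\phi$ over a recoloring of $\sigp$ does not blow up the instance (handled above via Lemma~\ref{lemma-frat-aug}) and that Corollary~\ref{testing-FO}, stated for queries over $\sig{0}$, is applied here to the recolored signature $\sigp$, which is literally the same proof.
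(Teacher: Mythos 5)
Your proposal is correct and matches the paper's intended argument: the corollary is obtained by applying Theorem~\ref{thm-quantifier-elim} to reduce to a simple quantifier-free formula over a recoloring of $\sigp$, then using the constant-time atom tests from the preprocessing of Corollary~\ref{testing-FO} while iterating over all nodes. The extra bookkeeping you supply (linear size of $\cG'$, applicability of the list construction to the recolored signature) is exactly the justification the paper leaves implicit.
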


\newcommand{\n}[1]{\ensuremath{\text{next}(#1)}\xspace}
\newcommand{\nemp}[1]{\ensuremath{\textsc{next}_{\emptyset}(#1)}\xspace}
\newcommand{\ns}[2]{\ensuremath{\textsc{next}_{#1}(#2)}\xspace}

\newcommand{\nspold}[1]{\ensuremath{\ns{f_{1},S_{1}, \ldots, f_{\sigfq}, S_{\sigfq}}{#1}}}
\newcommand{\nsppold}[1]{\ensuremath{\ns{f_{1},S'_{1}, \ldots, f_{\sigfq}, S'_{\sigfq}}{#1}}}
\newcommand{\nspppold}[1]{\ensuremath{\ns{f_{1},S''_{1}, \ldots, f_{\sigfq}, S''_{\sigfq}}{#1}}}
\newcommand{\sthold}[1]{\ensuremath{\ns{f_{1},S_{1}, \ldots, f_{i}, S_{i} \cup \set{u_{i}}, \ldots, f_{\sigfq}, S_{\sigfq}}{#1}}}

\newcommand{\nspnew}[1]{\ns{\vec{S}}{#1}}
\newcommand{\nsppnew}[1]{\ns{\vec{S'}}{#1}}
\newcommand{\nspppnew}[1]{\ns{\vec{S''}}{#1}}
\newcommand{\sthnew}[1]{\ns{\vec{S}[S_{i} += \set{u_i}]}{#1}}
\newcommand{\spthnew}[1]{\ns{\vec{S'}[S'_{i} += \set{u_i}]}{#1}}

\newcommand{\nsp}[1]{\nspnew{#1}}
\newcommand{\nspp}[1]{\nsppnew{#1}}
\newcommand{\nsppp}[1]{\nspppnew{#1}}
\newcommand{\sth}[1]{\sthnew{#1}}
\newcommand{\spth}[1]{\spthnew{#1}}

\newcommand{\bq}{\ensuremath{\beta_{q}}\xspace}
\newcommand{\SC}{\ensuremath{\textsc{SC}_{L}}\xspace}
\newcommand{\SCu}[1]{\ensuremath{\textsc{SC}_{L}(#1)}\xspace}
\newcommand{\sm}{\preceq}

\section{Enumeration}\label{section-enum}

In this section we consider first-order formulas with free variables and show
that we can enumerate their answers with constant delay over any class with
bounded expansion. Moreover, assuming a linear order on the domain of the input
structure, we will see that the answers can be output in the lexicographical
order. As before we only state the result for graphs, but it immediately
extends to arbitrary structures by Lemma~\ref{lemma-structures-to-graphs}.

\begin{thm}\label{theo-enum}
  Let \C be a class of graphs with bounded expansion and let $\phi(\bar x)$ be a
  first-order query. Then the enumeration problem of $\phi$ over
  \C is in \CDlin.
  \\ Moreover, in the presence of a linear order on the vertices of the input
  graph, the answers to $\phi$ can be output in lexicographical order.
\end{thm}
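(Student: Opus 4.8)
The plan is to reduce the general first-order case to the quantifier-free simple case via Theorem~\ref{thm-quantifier-elim}, and then to build a constant-delay enumeration procedure directly for the normal form of Proposition~\ref{prop-normalization}. So the first step is: given $\phi(\bar x)$, apply Theorem~\ref{thm-quantifier-elim} to obtain a number $p$ and a simple quantifier-free formula $\psi(\bar x)$ over a recoloring of $\sig{p}$ such that for every $\cG\in\C$ we can compute in time $O(\size{\cG})$ a graph $\cG'\in\Cp$ with $\psi(\cG')=\phi(\cG)$. Since the recoloring and the augmentation are done in linear time, they can be folded into the precomputation phase, and it remains to enumerate $\psi$ over $\Cp$ in lexicographic order with constant delay. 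Applying Proposition~\ref{prop-normalization} (with $y$ the last free variable) writes $\psi$ as a disjunction of formulas of the shape~\eqref{eq-nf}: $\psi_1(\bar x)\land\tau(y)\land\Delta^{=}(\bar x y)\land\Delta^{\neq}(\bar x y)$, again at the cost of a further linear-time augmentation. By Lemma~\ref{disjunct-enum} it suffices to enumerate each disjunct in lexicographic order with constant delay, so the whole problem reduces to enumerating a single formula of the form~\eqref{eq-nf}.

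For a single disjunct I would proceed by induction on the number of free variables (enumerating $\bar x$ first, and for each fixed $\bar x$ that can be extended, producing the admissible values of $y$ in order). The atoms constraining $\bar x$ alone are $\psi_1(\bar x)$, which by Corollary~\ref{testing-FO} can be tested in constant time after linear preprocessing; so the real work is the $y$-part. If $\Delta^{=}$ contains a clause $y=f(x_i)$, then $y$ is functionally determined, and we only need to check $\tau(y)$ and the inequalities, all in constant time via Corollary~\ref{testing-FO}; the $\bar x$-enumeration is then the inductive hypothesis on one fewer variable, composed with this constant-time check. If $\Delta^{=}$ is $f(y)=g(x_i)$, then the candidate $y$'s all lie among the predecessors-of-predecessors of $g(x_i)$; reusing the \witness-set machinery from the proof of Proposition~\ref{prop-quant-elim}, for each node $w$ there is a bounded set $\witness(w)$ (size depending only on $p$ and $\psi$) that contains, for every $\bar x$ with $g(x_i)=w$ and every witnessing $y$, at least one suitable $y$ — but for enumeration we cannot drop the other witnesses, since each genuinely distinct tuple must be output. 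Instead I would precompute, for each $w$, the full list of nodes $u$ with $f(u)=w$ and $\tau(u)$, sorted by the domain order; this list has size bounded by $\Dm$ of the $q$-th augmentation, hence by a constant, so it is of constant length and can be scanned in constant time per element. For the empty-$\Delta^{=}$ case, the set of candidate $y$'s satisfying $\tau(y)$ and the inequalities is again, by the witness analysis, either empty for a given $\bar x$ or contains a globally bounded sub-list sufficient to realize all tuples; one precomputes the relevant sorted lists of $\tau$-nodes indexed by the finitely many ``collision patterns'' with $\bar x$.

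The lexicographic-order requirement is the reason the bookkeeping has to be careful: the inductive enumeration of $\bar x=(x_1,\dots,x_{k})$ must be in lexicographic order, and then for each $\bar x$ the values of $y$ must be produced in increasing domain order while skipping the finitely many values forbidden by $\Delta^{\neq}$. Producing $\bar x$ in order is the delicate part of the induction: having enumerated a partial tuple $(x_1,\dots,x_{j})$ one must be able to decide in constant time what the next value of $x_{j+1}$ is such that the partial tuple is still extendible to a full answer; this ``is this prefix extendible'' test is exactly a unary or nullary query obtained by existentially quantifying the remaining variables, so it can again be handled by Theorem~\ref{thm-quantifier-elim} and Corollary~\ref{unary-FO}/Corollary~\ref{testing-FO} applied to a sub-formula, giving a coloring that marks precisely the good prefixes. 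Threading these colorings through all $k$ levels, using the $p$-type bound to keep every candidate list of constant length, and using the merge trick of Lemma~\ref{disjunct-enum} to recombine the disjuncts, yields a constant-delay procedure. The main obstacle, which I expect to occupy most of the real proof, is making the ``next extendible value'' step genuinely constant-time and genuinely in lexicographic order for the $f(y)=g(x_i)$ case: here the value of $y$ depends on $x_i$ through a function, so the natural order in which candidate $y$'s appear is not the domain order, and one must precompute, per node $w=g(x_i)$, the sorted list of its $\tau$-predecessors under $f$ together with pointers that let the enumeration skip the $\Delta^{\neq}$-forbidden ones in amortized constant time — and simultaneously guarantee no tuple is output twice across different disjuncts, which is what the lexicographic merge buys us.
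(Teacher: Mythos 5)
Your overall architecture --- quantifier elimination via Theorem~\ref{thm-quantifier-elim}, normalization via Proposition~\ref{prop-normalization}, per-disjunct treatment recombined with Lemma~\ref{disjunct-enum}, and an induction on the arity --- is the same as the paper's. However, two steps of your argument do not go through as written, and the first one is precisely where the real work of the proof lies.

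First, your claim that the candidate list $\set{u : f(u)=w \wedge \tau(u)}$ ``has size bounded by $\Dm$ of the $q$-th augmentation, hence by a constant'' is false: $\Dm$ bounds the \emph{in}-degree, i.e.\ the number of predecessors $f(u)$ of a fixed $u$, whereas this list collects all $u$ having $w$ as an $f$-predecessor, so its size is the \emph{out}-degree of $w$, which is unbounded even for graphs of bounded expansion (consider a star whose center is the $f$-image of every leaf). Consequently the list cannot be ``scanned in constant time per element''; worse, a single clause $f_i(y)\neq g(x_j)$ of $\Delta^{\neq}$ may forbid an unbounded number of \emph{consecutive} elements of the list (all $u$ with $f_i(u)=g(a_j)$), so even advancing to the next admissible candidate is not a constant-time operation by scanning. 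This is the central difficulty of the theorem. The paper resolves it with a family of precomputed \emph{shortcut pointers} $\ns{\vec{S}}{u}$ that jump to the first element $u'\geq u$ of the list whose images $f_i(u')$ avoid prescribed sets $S_i$, together with a careful argument that a linear-size subfamily $\SC$ of these pointers is sufficient, retrievable in constant time, and computable in linear time (Claims~\ref{claim-SC-size}, \ref{claim-SC-okjump} and \ref{claim-SC-lin}); the delay analysis then shows that between two outputs only the constantly many clauses of the form $y\neq g(x_j)$ can cause a failed candidate. You acknowledge the need for such pointers only in your final sentence, give no construction, and in any case ``amortized constant time'' does not meet the worst-case constant-delay requirement of \CDlin.

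Second, your scheme for producing $\bar x$ in lexicographic order via ``a coloring that marks precisely the good prefixes'' obtained from Corollary~\ref{unary-FO} cannot work as stated: extendibility of a prefix $(a_1,\dots,a_j)$ with $j\geq 2$ is a $j$-ary property, not a property of the single node $a_j$, so no unary recoloring captures it, and even with the constant-time tuple test of Corollary~\ref{testing-FO} you would still have to search for the next extendible value of $x_{j+1}$. The correct organization, which is the one the paper uses, is to apply the inductive hypothesis of the theorem itself to the $(k-1)$-ary query $\exists y\,\phi(\bar x y)$, obtaining the extendible prefixes as a lexicographically ordered constant-delay stream, and to invoke Proposition~\ref{prop-enum} for the last coordinate of each such prefix.
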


The proof of Theorem~\ref{theo-enum} is by induction on the number of free variables of $\phi$.
The unary case is done by Corollary~\ref{unary-FO}. The inductive case is a
simple consequence of the following:

\begin{prop}\label{prop-enum}
  Let \C be a class of graphs with bounded expansion and let $\phi(\bar x y)$ be a
  first-order query or arity 2 or more. Let $\graph$ be a graph of $\C$. Let $<$ be any linear order on the nodes of $\graph$.
  After a preprocessing working in time linear in the size of $\graph$
  we can, on input a tuple $\bar a$ of nodes of $\graph$,
  enumerate with constant delay and in the order given by $<$ all $b$ such that $\graph
  \models \phi(\bar a b)$ or answer \textsc{nill} if no such $b$ exists.
\end{prop}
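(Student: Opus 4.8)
The plan is to reduce, in linear time, to a quantifier-free query in the normal form of Proposition~\ref{prop-normalization}, and then to precompute index structures that let us ``jump'' over the elements that are not answers. \textbf{Reduction.} First apply Theorem~\ref{thm-quantifier-elim} to $\phi(\bar x y)$ to get $p$, a simple quantifier-free $\phi'(\bar x y)$ over a recoloring of $\sigp$, and a linear-time construction of a graph $\cGp \in \Cp$ from $\graph$ with $\phi'(\cGp)=\phi(\graph)$; since augmentations and recolorings leave the vertex set untouched, $<$ is still a linear order on $\cGp$ and we may work with $\phi'$ over $\cGp$. Then apply Proposition~\ref{prop-normalization} (and Lemma~\ref{lemma-simple} once more if needed) to get, again in linear time, a further graph $\cGpp\in\Cq$ and a quantifier-free $\psi=\bigvee_j \chi_j$ over a recoloring of $\sig{q}$ with $\psi(\cGpp)=\phi'(\cGp)$, each $\chi_j$ of the form $\psi_1(\bar x)\wedge\tau(y)\wedge\Delta^{=}(\bar x y)\wedge\Delta^{\neq}(\bar x y)$. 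As all the $\chi_j$ will be enumerated in the order $<$, the disjunction is handled by the merging argument of Lemma~\ref{disjunct-enum}, running the constantly many subprocedures in parallel; so it suffices to treat a single disjunct $\chi$.

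\textbf{Preprocessing and the easy cases.} During the preprocessing I would, as in Corollary~\ref{testing-FO}, attach to each vertex the list of its colors and of its images under the signature functions, so that any simple quantifier-free atom can be evaluated on given arguments in constant time; I would also compute the $<$-sorted doubly linked list of $T_\tau:=\{b:\cGpp\models\tau(b)\}$ together with an array of handles into it, and, for every function $f\in\sig{q}$ and every vertex $v$, the $<$-sorted list of $\{b\in T_\tau: f(b)=v\}$ (of total size $O(\size{\cGpp})$). On input $\bar a$, first test $\psi_1(\bar a)$; if it fails, answer \textsc{nill}. The remaining answers are the $b\in T_\tau$ satisfying $\Delta^{=}(\bar a y)\wedge\Delta^{\neq}(\bar a y)$. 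If $\Delta^{=}$ is $y=f(x_i)$, substituting leaves at most one candidate, $f(a_i)$, to be checked in constant time. If $\Delta^{=}$ is $f_{i_0}(y)=g(x_1)$ then the answers lie in the precomputed set $B:=\{b\in T_\tau: f_{i_0}(b)=g(a_1)\}$, and on $B$ every constraint $f_j(y)\neq g(x_k)$ of $\Delta^{\neq}$ with $j\ge i_0$ has a truth value fixed by $g(a_1)$, checked once; and if $\Delta^{=}$ is empty the answers lie in $B:=T_\tau$. In both of these cases it remains to enumerate, in the order $<$, the elements of a precomputed $<$-sorted (possibly linear-size) set $B$ that avoid a bounded set of forbidden points $f(a_i)$ and satisfy a bounded family of constraints of the form $f_j(y)\neq g(x_k)$.

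\textbf{The crux: skipping the forbidden region.} The forbidden points remove only boundedly many elements of $B$ and are trivially skipped. The remaining constraints are the real difficulty. Since each $f_j$ occurring in them lies in $\sigp$ and, on $T_\tau$, $f_j=h_{i^*,j}\circ f_{i^*}$ where $i^*$ is the least occurring index, all these constraints are equivalent to conditions on the single predecessor $\pi(y):=f_{i^*}(y)$, namely that $\pi(y)$ must avoid a set $E$ that is a union of boundedly many fibers $\{w:h(w)=v\}$; crucially, such a fiber can be of linear size, so $B\setminus\{\,b:\pi(b)\in E\,\}$ need not be co-bounded. I would precompute, for each vertex $b$, a pointer to the $<$-successor in $B$ of the maximal $<$-contiguous block of $B$ on which $\pi$ is constant (one scan per function, $O(\size{\cGpp})$ total), and enumerate $B$ with a single cursor: inside a block whose $\pi$-value lies outside $E$ it outputs-and-advances one cell at a time, and as soon as it enters a block whose $\pi$-value lies in $E$ it jumps by this pointer. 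The main obstacle of the whole proof is to make the delay \emph{worst-case} constant and not merely amortized: one forbidden block costs one jump, but several forbidden blocks (distinct $\pi$-values, all in $E$) can be consecutive, and a naive one-constraint-at-a-time skip can ping-pong between the linear-size fibers of different constraints. To defeat this I would precompute further jump pointers, one for each subset of the boundedly many occurring constraints, organized by exploiting that within $T_\tau$ the predecessors $f_1(y)<f_2(y)<\cdots$ of a vertex are linearly ordered by $\tau$ and tied together by the functions $h_{i,j}$, so that from any forbidden cell the next non-forbidden cell (or the end of $B$) is reached after a bounded number of jumps, independently of $\bar a$. This yields the enumeration of $\chi$ in the order $<$ with constant delay and constant extra memory, completing the single-disjunct case and hence Proposition~\ref{prop-enum}.

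\textbf{From Proposition~\ref{prop-enum} to Theorem~\ref{theo-enum}.} Finally, Theorem~\ref{theo-enum} follows by induction on the arity of $\phi$: the unary case is Corollary~\ref{unary-FO}; for $\phi(x_1,\dots,x_k)$ with $k\ge 2$ one enumerates, by the induction hypothesis applied to $(\exists x_k)\,\phi$, the prefixes $(a_1,\dots,a_{k-1})$ in lexicographic order, and for each of them uses Proposition~\ref{prop-enum} (with $y=x_k$) to enumerate in the order $<$ all $a_k$ with $\graph\models\phi(a_1,\dots,a_k)$; since every enumerated prefix has at least one extension, concatenating the streams gives all of $\phi(\graph)$ with constant delay, in lexicographic order.
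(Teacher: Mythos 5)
Your overall architecture coincides with the paper's: quantifier elimination (Theorem~\ref{thm-quantifier-elim}), normalization (Proposition~\ref{prop-normalization}), reduction to a single disjunct via Lemma~\ref{disjunct-enum}, dispatch on $\Delta^{=}$, and then a precomputed ``jump'' structure over the sorted candidate list to skip non-answers. The easy cases and the derivation of Theorem~\ref{theo-enum} from the proposition are fine. The problem is that the one genuinely hard step --- the jump structure --- is not actually constructed in your proposal.

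You correctly isolate the difficulty: the candidates to be skipped form a union of boundedly many fibers $\{b: h(f_{i^*}(b))=g(a_k)\}$, each possibly of linear size and not contiguous in $<$, and a one-constraint-at-a-time skip can ping-pong. But your two proposed devices do not resolve it. First, the pointer to the end of ``the maximal $<$-contiguous block of $B$ on which $\pi$ is constant'' is useless in the worst case: the excluded set is a union of fibers of compositions $h\circ\pi$, which may consist of linearly many distinct $\pi$-values, each occurring in a singleton block, so this pointer advances by one cell. Second, ``further jump pointers, one for each subset of the boundedly many occurring constraints'' cannot be precomputed as stated: a constraint $f_j(y)\neq g(x_k)$ only becomes a concrete forbidden value $g(a_k)$ at query time, so the pointers must be indexed not by which constraints occur but by tuples of concrete forbidden vertices --- and there are too many of those to precompute all of them in linear time. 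The paper's resolution is the family of shortcut pointers $\ns{\vec{S}}{u}$ together with the restricted set \SC, built inductively by extending a pointer only with the values $f_i(v)$ of the very element $v$ it points to. Three things then have to be proved: \SC has constant size per node (Claim~\ref{claim-SC-size}); for an \emph{arbitrary} bounded query $\ns{\vec{S}}{u}$ the $\sm$-maximal precomputed pointer dominated by $\vec S$ returns exactly the right element (Claim~\ref{claim-SC-okjump}, the key correctness argument); and \SC is computable in linear time by a reverse scan reusing already-computed pointers (Claim~\ref{claim-SC-lin}). None of these appears in your proposal, and your closing assertion that ``the next non-forbidden cell is reached after a bounded number of jumps, independently of $\bar a$'' is precisely the statement that needs proof. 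As written, the argument has a gap exactly where the proposition stops being routine.
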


\begin{proof}
  Fix a class \C of graphs with bounded expansion and a query $\phi(\bar x y)$
  with $k\geq 2$ free variables. Let $\cG$ be the functional representation of the input graph and $V$ be its set
  of vertices. Let $<$ be any order on $V$.

During the preprocessing phase, we apply Theorem~\ref{thm-quantifier-elim} to
get a simple quantifier-free query $\varphi(\bar x y)$ and a structure
$\cG'\in\C_p$, for some $p$ that does not depend on $\cG$, such that
$\varphi(\cG')=\phi(\cG)$ and $\cG'$ can be computed in linear time from $\cG$.

Furthermore we normalize the resulting simple quantifier-free query using
Proposition~\ref{prop-normalization}, and obtain an equivalent quantifier-free
formula $\psi$ and a structure $\cG''\in\C_q$, where $q$ depends only on
$p$ and $\varphi$, $\cG''$ can be computed in linear time from $\cG'$,
$\varphi(\cG')=\psi(\cG'')$ and $\psi$ is a disjunction of formulas
of the form~\eqref{eq-nf}:
\begin{equation*}
\psi_1(\bar x) \land \tau(y) \land \Delta^{=}(\bar x y) \land
\Delta^{\neq}(\bar x y),
\end{equation*}
where  $\Delta^{=}(\bar x y)$ is either empty or
contains one clause of the form $y=f(x_i)$ or one clause of the form
$f(y)=g(x_i)$ for some $i$, $f$ and $g$; and $\Delta^{\neq}(\bar x y)$
contains arbitrarily many clauses of the form $y\neq f(x_i)$ or $f(y) \neq g(x_j)$.

In view of Lemma~\ref{disjunct-enum} it is enough to treat each
disjunct separately. In the sequel we then assume that $\psi$ has the form
described in~\eqref{eq-nf}. We let $\psi'(y)$ be the formula $\exists \bar{x} \psi(\bar x y)$ and $\psi''(\bar x)$ the formula $\exists y \psi(\bar x y)$.

If $\Delta^{=}$ contains an equality of the form $y=f(x_i)$ we then use
Corollary~\ref{testing-FO} and test whether $\cG''\models \psi(\bar a f(a_i))$ and
we are done as $f(a_i)$ is the only possible solution for $\bar a$.

Assume now that $\Delta^{=}$ is either empty or of the form $f(y)=g(x_i)$.

We first precompute the set of possible candidates for $y$ (i.e., those $y$
satisfying $\psi'$) and distribute this set within their images by $f$. In other
words we define a function $L: V \rightarrow 2^V$ such that
\[
    L(w) =\set{u ~|~ w=f(u) \land u \in \psi'(\cG'')}.
\]
In the specific case where $\Delta^=$ is
empty we pick an arbitrary  node $w_0$ in $\cG''$ and set
$L(w_0)=\psi'(\cG'')$ and $L(w)=\emptyset$ for $w\neq w_0$. This can be done in linear
time by the following procedure. We first use Corollary~\ref{unary-FO} and
compute in linear time the set $\psi'(\cG'')$. We next initialize $L(w)$ to $\emptyset$ for each
$w\in V$. Then, for each $u\in\psi'(\cG'')$, we add $u$ to the set $L(f(u))$.

Let $W$ be the function from $V^{k-1}$ to $V$ such that $W(\bar v) =
g(v_i)$. In the specific case where $\Delta^=$ is
empty we set $W(\bar v)=w_0$, where $w_0$ is the node chosen above.

Notice that for each $\bar v u$, $\cG'' \models \psi(\bar v u)$ implies $u \in L(W(\bar v))$ and  if $u \in
L(W(\bar v))$ then $\Delta^{=}(\bar v u)$ is true. Hence, given $\bar a$ it
remains to enumerate within $L(W(\bar a))$ the nodes $b$ satisfying
$\Delta^{\neq}(\bar a b)$.

To do this with constant delay, it will be important to jump from an element $u$ of $L(w)$ to the smallest
(according to $<$) element $u' \geq u$ of $L(w)$ satisfying the inequality constraints.

For this we define for $S_{1}, \ldots, S_{\sigfq} \subseteq V$ the element $\nspold{u}$ to be the
first element $w \geq u$ of $L(f(u))$\footnote{In order to simplify the
  notations we consider explicitly the case where $\Delta^=$ is not empty. If
  empty then $L(f(u))$ should be replaced by $L(w_0)$.} such that $f_{1}(w) \notin S_{1}, \ldots, \text{ and }
f_{\sigfq}(w) \notin S_{\sigfq}$. If such $w$ does not exist, the value
of $\nspold{u}$ is \nil. When all $S_{i}$ are empty, we write $\nemp{u}$ and by the above
definitions we always have $\nemp{u} = u$.
We denote such functions as \emph{shortcut pointers of $u$}.
The \emph{size} of a shortcut pointer $\nspold{u}$ is the sum of sizes of the sets $S_{i}$.

In order to avoid writing too long expressions containing shortcut pointers, we
introduce the following abbreviations:
\begin{itemize} \itemsep1pt \parskip0pt \parsep0pt
	\item $\nspold{u}$ is denoted with $\nspnew{u}$,
	\item $\sthold{u}$ is denoted with $\sthnew{u}$.
\end{itemize}

\medskip

\noindent
Set $\gamma_q = (k-1) \cdot \sigfq^2$.

\medskip

Computing all shortcut pointers of size $\gamma_q$ would take more than linear time.
We therefore only compute a subset of those, denoted \SC, that will be sufficient
for our needs. \SC is defined in an inductive manner. For all $u$
such that $u \in L(f(u))$, $\nemp{u} \in \SC$.
Moreover, if the shortcut pointer $\nsp{u}\in\SC$ is not $\nil$ and has a size smaller
than $\gamma_q$, then, for each $i$, $\sth{u}\in\SC$, where $u_i=f_i(\nsp{u})$.
We then say that $\nsp{u}$ is the \emph{origin} of $\sth{u}$.
Note that \SC contains all the shortcut pointers of the form
$\ns{f_{i}, \set{f_{i}(u)}}{u}$ for $u\in L(f(u))$ and these are exactly the
shortcut pointers of $u$ of size $1$. By $\SCu{u} \subseteq \SC$ we denote
the shortcut pointers of $u$ that are in \SC\@.

The set \SC contains only a constant number of shortcut pointers for each node $u$.

\begin{clm}\label{claim-SC-size}
There exists a constant $\zeta(q,k)$ such that for every
node $u$ we have $|\SCu{u}| \leq \zeta(q,k)$.
\end{clm}

\begin{proof}
The proof is a direct consequence of the recursive definition of $\SCu{u}$.
Fix $u$. Note that there is exactly $1$ shortcut pointer of $u$
of size $0$ (namely $\nemp{u}$) and $\sigfq$ shortcut pointers of $u$
of size $1$. By the definition of $\SC$, any shortcut pointer
$\nsp{u}$ can be an origin of up to $\sigfq$ shortcut pointers
of the form $\sth{u}$, where $u_{i} = f_{i}(\nsp{u})$ and the
size of $\sth{u}$ is the size of $\nsp{u}$ plus $1$. This way we see that $\SCu{u}$
contains up to $\sigfq^{2}$ shortcut pointers of size $2$ and, in general,
up to $\sigfq^{s}$ shortcut pointers of size $s$. As the maximal size of
a computed shortcut pointer is bounded by $\gamma_q$, we have
$|\SCu{u}| \leq \sum_{0 \leq i \leq \gamma_q} \sigfq^{i}$. Both $\sigfq$ and
$\gamma_q$ depend only on $q$ and $k$, which concludes the proof.
\end{proof}

Moreover $\SC$ contains all what we need to know.

\begin{clm}\label{claim-SC-okjump}
Let $\nsp{u}$ be a shortcut pointer of size not greater than $\gamma_q$.
Then there exists $\nspp{u} \in \SC$ such that $\nsp{u} = \nspp{u}$.
Moreover, such $\nspp{u}$ can be found in constant time.
\end{clm}

\begin{proof}
If $\nsp{u} \in \SC$, then we have nothing to prove. Assume then
that $\nsp{u} \notin \SC$.
We write $\nsppold{u} \sm \nspold{u}$ if for each
$1 \leq i \leq \sigfq$ we have $S'_{i} \subseteq S_{i}$. Note that for a given
$u$ the $\sm$ relation is a partial order on the set of shortcut pointers of $u$.
A trivial observation is that if $\nsppold{u} \sm \nspold{u}$, then
$\nsppold{u} \leq \nspold{u}$.

 Let $\nspp{u} \in \SC$ be a maximal
in terms of size shortcut pointer of $u$ such that $\nspp{u} \sm \nsp{u}$. Such
a shortcut pointer always exists as $\nemp{u} \sm \nsp{u}$ and
$\nemp{u} \in \SC$. Note that the size of $\nspp{u}$ is strictly
smaller than the size of $\nsp{u}$, so it is strictly smaller than $\gamma_q$.
One can find $\nspp{u}$ by exploring all the shortcut pointers of $u$ in
$\SCu{u}$. This can be done in constant time using Claim~\ref{claim-SC-size}.

We now claim that $\nsp{u} = \nspp{u}$.

Let $v = \nspp{u}$. We know that $v \leq \nsp{u}$.
Assume now that there would exists $1 \leq i \leq \sigfq$
such that $f_{i}(v) \in S_{i}$. Then we have that $\spth{u} \in \SC$, where
$u_i=f_i(v)$, and this contradicts the maximality of $\nspp{u}$. This means that such an $i$ does
not exist and concludes the fact that $\nsp{u} = \nspp{u}$.
\end{proof}


The following claim guarantees that \SC can be computed in linear time and has
therefore a linear size.

\begin{clm}\label{claim-SC-lin}
  \SC can be computed in time linear in $\size{\cG''}$.
\end{clm}
\begin{proof}
  For every $u$ we compute $\SC(u)$ in time linear in the size of $\SC(u)$.
  By Claim~\ref{claim-SC-size} the total time is linear in the
  size of $V$ as claimed.

  The computation of $\SC(u)$ is done in reverse order on $u$: Assuming
  $\SC(v)$ has been computed for all $v > u$ we compute $\SC(u)$ time linear in
  the size of $L(u)$.

  Note that we only care to compute $\nsp{u}$ when $u\in L(u)$.

  Consider a node $u$. If $u$ is the maximaum vertex then all $\nsp{u}$ are \nil.

  Assume now that $u$ is not the maximum vertex and that for all $v>u$ $\SC(v)$
  has been computed. If $u$ does not belong to $L(u)$ we do nothing. If
  $u\in L(u)$ we set $\nemp{u} = u$ and we construct $\SC(u)$ by induction on
  the size.

  Assume $\nsp{u} \in \SC$ has already been computed. Let $v=\nsp{u}$ and
  assume it is not $\nil$. Let
  $\vec{S'}=\vec{S}[S_i +\!\!= \set{f_i(v)}]$ and assume $\nspp{u} \in \SC(u)$.

  It is easy to see that $\nspp{u}=\nspp{v}$. By Claim~\ref{claim-SC-okjump} we
  can obtain in constant time $\nsppp{v} \in \SC(v)$ such that
  $\nspp{v}=\nsppp{v}$. As $v >u$ the value of $\nsppp{v}$ has already been computed.
\end{proof}

The computation of \SC concludes the preprocessing phase and it follows from
Claim~\ref{claim-SC-lin} that it can be done in linear time.
We now turn to the enumeration phase.

Assume we are given $\bar a$. In view of Corollary~\ref{testing-FO} we can
without loss of generality assume that $\bar a$ is such that
$\graph \models \psi''(\bar a)$. If not we simply return \textsc{nill} and stop
here.

By construction we know that all nodes $b$ such that
$\cG'' \models \psi(\bar a b)$ are in $L=L(W(\bar a))$. Recall also that all
elements $b\in L$ make $\tau(b) \wedge \Delta^{=}(\bar a b)$ true.  For
$1 \leq i \leq \sigfq$ we set
$S_{i} = \set{g(v_{j}): g(x_{j}) \neq f_{i}(y) \text{ is a conjunct of }
  \Delta^{\neq}}$.
Starting with $b$ the first node of the sorted list $L$, we apply the following
procedure:

\begin{enumerate} \itemsep1pt \parskip0pt \parsep0pt
\item\label{start} If $b$ is not $\nil$, let $\nspp{b}$ be the shortcut pointer from the
	application of Claim~\ref{claim-SC-okjump} to $\nsp{b}$. Set $b' = \nspp{b}$.
	If $b' = \nil$, stop here.

\item\label{output-ok} If $\cG'' \models \psi(\bar a b')$, output
	$b'$.

\item\label{output-bad} Reinitialize $b$ to the successor of $b'$ in $L$
        and continue with Step~\ref{start}.
\end{enumerate}

\noindent
We now show that the algorithm is correct.

The algorithm clearly outputs only solutions as it tests whether
$\cG'' \models \psi(\bar a b')$ before outputting~$b'$.

By the definition of sets $S_{i}$ and $\nsp{b}$, for each $b \leq v < b'$
there is a $i$ and $j$ such that $g(a_{j}) = f_{i}(v)$ and
$g(x_{j}) \neq f_{i}(y)$ is a conjunct of $\Delta^{\neq}$. This way the
algorithm does not skip any solutions at Step~\ref{start} and so
it outputs exactly all solutions.

It remains to show that there is a constant time between any two outputs.
Step~\ref{start} takes constant time due to Claim~\ref{claim-SC-okjump}. From
there the algorithm either immediately outputs a solution at
Step~\ref{output-ok} or jumps to Step~\ref{output-bad}. In the second case,
this means that $\cG'' \not\models \psi(\bar a b')$, but from the definitions
of list $L$, sets $S_{i}$ and shortcut pointers $\nsp{b}$ this can only happen
if $\Delta^{\neq}$ is falsified and this is because of an inequality of the
form $y \neq g(x_{j})$ for some suitable $g$ and $j$ (where $g$ may possibly be
identity). Hence $b'=g(a_j)$. As all the elements on $L$ are
distinct, the algorithm can skip over Step~\ref{output-ok} up to
$(k-1) \cdot (\sigfq+1)$ times for each tuple $\bar a$ (there are up to that
many different images of nodes from $\bar a$ under $\sigfq$ different
functions). The delay is therefore bounded by $k \cdot (\sigfq+1)$ consecutive applications of Claim~\ref{claim-SC-okjump}.

As the list $L$ was sorted with respect to the linear order on the domain,
it is clear that the enumeration procedure outputs the set of solutions in
lexicographical order.

This concludes the proof of the theorem.
\end{proof}

\section{Counting}

In this section we investigate the problem of counting the number of solutions
to a query, i.e., computing $|q(\cD)|$. As usual we only state and prove our
results over graphs but they generalize to arbitrary relational structures
via Lemma~\ref{lemma-structures-to-graphs}.

\begin{thm}\label{counting}
Let \C be class of graphs with bounded expansion and let $\phi(\bar x)$
be a first-order formula. Then, for all $\cG \in \C$, we can compute $|\phi(\cG)|$
in time $O(\size{\cG})$.
\end{thm}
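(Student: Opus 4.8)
The plan is to reduce, as for enumeration, to the quantifier-free normal form and then count solutions directly, using the induction on the number of free variables. First I would apply Theorem~\ref{thm-quantifier-elim} and Proposition~\ref{prop-normalization} to replace $\phi(\bar x)$ by an equivalent simple quantifier-free $\psi$ over a recoloring of some $\sigq$, together with a structure $\cG''\in\Cnum{q}$ computable in time $O(\size{\cG})$; this step is harmless since $q$ depends only on $\phi$ and $\C$. The case of arity $\le 1$ is settled by Corollary~\ref{unary-FO}, which even computes $\psi(\cG'')$ explicitly, so $|\psi(\cG'')|$ is obtained in linear time. For the inductive step it suffices, by the inclusion--exclusion identity
\begin{equation*}
|\phi(\cG)| = \sum_{\bar a \in \psi''(\cG'')} |\set{b : \cG'' \models \psi(\bar a b)}|,
\end{equation*}
to compute, after a linear-time preprocessing, the number $N(\bar a)$ of witnesses $b$ for each relevant $\bar a$ of arity $k-1$, and then to sum these counts. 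The subtlety, just as for enumeration, is that $\psi''(\cG'')$ may have size $n^{k-1}$, so I cannot iterate over it naively; instead I will group the tuples $\bar a$ according to the finitely many possible values of $N(\bar a)$.

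By Lemma~\ref{disjunct-enum}-style reasoning it is enough to handle one disjunct of the form~\eqref{eq-nf}, $\psi_1(\bar x)\land\tau(y)\land\Delta^{=}(\bar x y)\land\Delta^{\neq}(\bar x y)$, and then correct for overlaps between disjuncts by a standard inclusion--exclusion over the (constantly many) disjuncts, applying the whole argument to conjunctions of disjuncts. If $\Delta^{=}$ contains an equality $y=f(x_i)$, then $N(\bar a)\in\set{0,1}$ and is computed in constant time from Corollary~\ref{testing-FO}, so $|\phi(\cG)|$ is the number of $\bar a$ with $\cG''\models\psi(\bar a\,f(a_i))$, a $(k-1)$-ary count handled by induction. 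If $\Delta^{=}$ is empty or of the form $f(y)=g(x_i)$, I reuse the function $L$ and $W$ from the proof of Proposition~\ref{prop-enum}: all witnesses $b$ of $\bar a$ lie in $L(W(\bar a))$, and within this list $b$ is a witness iff it satisfies the inequalities $\Delta^{\neq}(\bar a b)$. The key point is that $N(\bar a)=|L(W(\bar a))| - (\text{number of }b\in L(W(\bar a))\text{ killed by some inequality})$, and the forbidden values are among the constantly many nodes $\set{f_i(a_j)}$; so $N(\bar a)$ is determined by $|L(W(\bar a))|$ together with, for each forbidden node $v=f_i(a_j)$, whether $v\in L(W(\bar a))$ and whether $v$ violates an inequality --- all checkable in constant time once we know $W(\bar a)$ and can test membership $v\in L(w)$ in constant time (which, as in Corollary~\ref{testing-FO}, we arrange by a suitable recoloring during preprocessing).

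It remains to sum $N(\bar a)$ over all $\bar a$ with $\cG''\models\psi''(\bar a)\land\psi_1(\bar a)$. I would split this sum as
\begin{equation*}
\sum_{\bar a} N(\bar a) = \sum_{\bar a} |L(W(\bar a))| \;-\; \sum_{i,j}\ \sum_{\bar a:\ \text{the }j\text{-th image under }f_i\text{ is the minimal violated witness}} 1,
\end{equation*}
where the second family of sums is, for fixed $i,j$, a count of $(k-1)$-tuples satisfying a quantifier-free condition, handled by induction on $k$. For the first sum, $|L(W(\bar a))|=|L(g(a_i))|$ depends only on the single node $g(a_i)$, so precomputing the integer $|L(w)|$ for every node $w$ (linear time, since $\sum_w |L(w)|\le n$) reduces $\sum_{\bar a} |L(g(a_i))|$ to a weighted count: for each node $w$, multiply $|L(w)|$ by the number of $(k-1)$-tuples $\bar a$ with $g(a_i)=w$, $\cG''\models\psi''(\bar a)$, and $\psi_1(\bar a)$ --- again a $(k-1)$-ary counting problem (parametrised by the colour ``$g(x_i)=w$'', encoded via a recoloring, or handled uniformly by counting pairs). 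Applying the induction hypothesis to these finitely many auxiliary $(k-1)$-ary counting problems and combining by inclusion--exclusion over disjuncts yields $|\phi(\cG)|$ in time $O(\size{\cG})$. The main obstacle is bookkeeping: making precise that every auxiliary count that arises is genuinely a first-order counting problem of strictly smaller arity over a class of bounded expansion (so that the induction applies), and checking that the constantly many corrections from the inequality constraints and from the overlap between disjuncts are all of this form.
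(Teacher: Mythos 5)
Your skeleton (reduce to the normal form~\eqref{eq-nf}, induct on the arity, handle disjuncts by inclusion--exclusion, case split on $\Delta^{=}$) is the same as the paper's, but two of your key steps do not go through as written. First, your formula for $N(\bar a)$ rests on the claim that the witnesses in $L(W(\bar a))$ excluded by $\Delta^{\neq}$ are among the constantly many nodes $f_i(a_j)$. That is true for clauses of the form $y\neq f(x_i)$, but not for clauses $f(y)\neq g(x_j)$ with $f$ a genuine function symbol: such a clause excludes every $b\in L(W(\bar a))$ with $f(b)=g(a_j)$, i.e.\ an entire preimage $f^{-1}(g(a_j))\cap L(W(\bar a))$, and preimages under the $f_i$ (out-degrees in the oriented graph) are \emph{not} bounded in a class of bounded expansion --- this is exactly why the enumeration proof needs the shortcut-pointer machinery rather than a constant number of skips. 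So $N(\bar a)$ is not computable by a constant number of constant-time corrections, and the correction sums you write down are not $(k-1)$-ary counts of bounded multiplicity. The paper avoids this entirely by an \emph{inner induction on the number of inequalities}: it writes $\cw{\psi}{\cGpp}{\#}=\cw{\psi^{-}}{\cGpp}{\#}-\cw{\psi^{+}}{\cGpp}{\#}$, where $\psi^{+}$ replaces the inequality by the corresponding equality and is then renormalized (Claim~\ref{claim-ct-psiplus}) so that $\Delta^{=}$ again has at most one clause and no new inequalities appear; the hard case is thus reduced to $\Delta^{\neq}=\emptyset$.

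Second, even with $\Delta^{\neq}$ empty, your reduction of $\sum_{\bar a}|L(g(a_i))|$ to the induction hypothesis is circular. Your induction hypothesis is plain (unweighted) counting, but the quantity $\sum_{\bar a:\theta(\bar a)}|L(g(a_i))|$ is a \emph{weighted} count. Neither of your proposed fixes works: a colour ``$g(x_i)=w$'' for every node $w$ is not a recoloring in the paper's sense (it needs $n$ colours, and running the $(k-1)$-ary counter once per $w$ costs $O(n^2)$), while ``counting pairs $(\bar a,b)$ with $b\in L(g(a_i))$'' is again a $k$-ary counting problem, so the arity does not decrease. The paper resolves this by strengthening the statement to the weighted version $\cw{\phi}{\cG}{\#}=\sum_{\bar u\in\phi(\cG)}\prod_i\#_i(u_i)$ (Lemma~\ref{ct-lemma}): the number of witnesses $y$ attached to a given value of $f(x_1)$ is absorbed into a new weight $\#'_1(w):=\#_1(w)\cdot\#''_k(f(w))$, which is precisely the device your plan is missing. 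I would encourage you to redo the argument with the weighted invariant as the induction hypothesis and with the inequality clauses eliminated one at a time by the $\psi^{-}/\psi^{+}$ subtraction; the rest of your outline then matches the paper.
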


\begin{proof}
The key idea is to prove a weighted version of the desired result. Assume
$\phi(\bar x)$ has exactly $k$ free variables and for $1 \leq i \leq k$ we have
functions $\#_{i} : V \rightarrow \Nat$. We will compute in time linear in
$\size{\cG}$ the following number:

\begin{equation*}
\cw{\phi}{\cG}{\#} := \sum_{\bar u \in
  \phi(\cG)} \prod_{1 \leq i \leq k} \#_{i}(u_{i}).
\end{equation*}

By setting all $\#_{i}$ to be constant functions with value $1$ we get the
regular counting problem. Hence Theorem~\ref{counting} is an immediate
consequence of the next lemma.

\begin{lem}\label{ct-lemma}
Let \C be class of graphs with bounded expansion and let $\phi(\bar x)$
be a first-order formula with exactly $k$ free variables.\\ For $1 \leq i \leq k$
let $\#_{i} : V \rightarrow \Nat$ be functions such that for each $v$
the value of $\#_i(v)$ can be computed in constant time.\\
Then, for all $\cG \in \C$, we can compute $\cw{\phi}{\cG}{\#}$
in time $O(\size{\cG})$.
\end{lem}

\begin{proof}
The proof is by induction on the number of free variables.

The case $k = 1$ is trivial: in time linear in $\size{\cG}$ we compute
$\phi(\cG)$ using Corollary~\ref{unary-FO}. By hypothesis, for each $v \in \phi(\cG)$, we
can compute the value of $\#_{1}(v)$ in constant time. Therefore the value
\begin{equation*}
\cw{\phi}{\cG}{\#} = \sum_{v \in \phi(\cG)} \#_{1}(v)
\end{equation*}
can be computed in linear time as desired.

Assume now that $k > 1$ and that $\bar x$ and $y$ are the free variables of
$\phi$, where $|\bar x|=k-1$.

We apply Theorem~\ref{thm-quantifier-elim} to get a simple quantifier-free query
$\varphi(\bar x y)$ and a structure $\cGp\in\Cp$, for some $p$ that does not
depend on $\cG$, such that $\varphi(\cGp)=\phi(\cG)$ and $\cGp$ can be
computed in linear time from $\cG$. Note that $\cw{\phi}{\cG}{\#}=\cw{\varphi}{\cGp}{\#}$,
so it is enough to compute the latter value.

We normalize the resulting simple quantifier-free query using
Proposition~\ref{prop-normalization}, and obtain an equivalent quantifier-free
formula $\psi$ and a structure $\cGpp\in\Cq$, where $q$ depends only on
$p$ and $\varphi$, $\cGpp$ can be computed in linear time from $\cGp$,
$\varphi(\cGp)=\psi(\cGpp)$ and $\psi$ is a disjunction of formulas
of the form~\eqref{eq-nf}:
\begin{equation*}
\psi_1(\bar x) \land \tau(y) \land \Delta^{=}(\bar x y) \land
\Delta^{\neq}(\bar x y),
\end{equation*}
where  $\Delta^{=}(\bar x y)$ is either empty or
contains one clause of the form $y=f(x_i)$ or one clause of the form
$f(y)=g(x_i)$ for some suitable $i$, $f$ and $g$; and $\Delta^{\neq}(\bar x y)$
contains arbitrarily many clauses of the form $y\neq f(x_i)$ or $f(y) \neq g(x_j)$.
Note that $\cw{\varphi}{\cGp}{\#}=\cw{\psi}{\cGpp}{\#}$, so it is enough
to compute the latter value.

Observe that it is enough to solve the weighted counting problem for each
disjunct separately, as we can then combine the results using a simple
inclusion-exclusion reasoning (the weighted sum for $q \lor q'$ is obtained by
adding the weighted sum for $q$ to the weighted sum for $q'$ and then
subtracting the weighted sum for $q \land q'$). In the sequel we then assume that $\psi$ has the form
described in~\eqref{eq-nf}.


The proof now goes by induction on the number of inequalities in
$\Delta^{\neq}$.  While the inductive step turns out to be fairly easy, the
difficult part is the base step of the induction.

We start with proving the inductive step.  Let $g(y) \neq f(x_i)$ be an
arbitrary inequality from $\Delta^{\neq}$ (where $g$ might possibly be the
identity).  Let $\psi^{-}$ be $\psi$ with this inequality removed and $\psi^{+}
= \psi^{-} \wedge g(y) = f(x_i)$. Of course $\psi$ and $\psi^{+}$ have disjoint
sets of solutions and we have:
\begin{equation*}
\cw{\psi}{\cGpp}{\#} = \cw{\psi^{-}}{\cGpp}{\#} - \cw{\psi^{+}}{\cGpp}{\#}.
\end{equation*}

Note that $\psi^-$ and $\psi^{+}$ have one less conjunct in
$\Delta^{\neq}$. The problem is that $\psi^+$ is not of the form~\eqref{eq-nf}
as it may now contain two elements in $\Delta^=$. However it can be seen that
the removal of the extra equality in $\Delta^=$ as described in the proof of
Proposition~\ref{prop-normalization} does not introduce any new elements in
$\Delta^{\neq}$.

\begin{clm}{\label{claim-ct-psiplus}.}
There exists a query $\psipp$ such that:
its size depends only on the size of $\psi^{+}$, $\psipp$
is in the normal form given by~\eqref{eq-nf},
it contains an inequality conjunct $h(y) \neq g_{1}(x_{i})$ (where
$h$ might possibly be identity) iff $\psi^{+}$ also contains
such conjunct and $\psipp(\cGpp) = \psi^{+}(\cGpp)$.
Moreover, $\psipp$ can be constructed in time linear in the
size of $\psi^{+}$.
\end{clm}

\begin{proof}
The proof is a simple case analysis of the content of $\Delta^{=}$ of $\psi$.

If its empty, then $\psipp$ is already in the desired form.

If it contains an atom of the form $y = h_{2}(x_j)$, then
equality $g(y) = f(x_i)$ is equivalent to $g(h_{2}(x_j)) = f(x_i)$
and we are done.

If it contains an atom of the form $h_{3}(y) = h_{2}(x_j)$
and $g$ is identity, then $h_{3}(y) = h_{2}(x_j)$ is equivalent
to $h_{3}(f(x_{i})) = h_{2}(x_j)$. If $g$ is not identity, then
$\tau(y)$ ensures us that either $g(y)$ determines $h_{3}(y)$
or vice versa. If we have $h_{4}(g(y)) = h_{3}(y)$, then
$h_{3}(y) = h_{2}(x_j)$ is equivalent to $h_{4}(f(x_i)) = h_{2}(x_j)$.
The other case is symmetric.

The fact that $\psipp$ does not contain any additional inequalities,
that it can be computed in time linear in the size of $\psi^{+}$
and that $\psipp(\cGpp) = \psi^{+}(\cGpp)$ follows from the
above construction.
\end{proof}

We can therefore remove
the extra element in $\Delta^+$ and assume that $\psi^+$ has the desired
form. We can now use the inductive hypothesis on the size of $\Delta^{\neq}$
to both $\psi^-$ and $\psi^+$ in order to compute both
$\cw{\psi^{-}}{\cGpp}{\#}$ and $\cw{\psi^+}{\cGpp}{\#}$ and derive
$\cw{\psi}{\cGpp}{\#}$.

It remains to show the base of the inner induction.
In the following we assume that $\Delta^{\neq}$ is empty.
The rest of the proof is a case analysis on the content of $\Delta^{=}$.

Assume first that $\Delta^{=}$ consists of an atom of the form $y = f(x_1)$.

Note that the solutions to $\psi$ are of the form $(\bar v f(v_1))$. We have:
{
\small
\begin{align*}
\cw{\psi}{\cGpp}{\#}\!
&= \!\!\!\!\!\!\!\!\!\sum_{(\bar v u) \in \psi(\cGpp)}
	\left(\#_k(u) \prod_{1 \leq i \leq k-1} \#_{i}(v_{i})\right)\\
&= \!\!\!\!\!\!\!\!\!\!\!\!\sum_{(\bar v f(v_1)) \in \psi(\cGpp)}
	\left(\#_k(f(v_1)) \prod_{1 \leq i \leq k-1} \#_{i}(v_{i}) \right)\\
&= \!\!\!\!\!\!\!\!\!\!\!\!\sum_{(\bar v f(v_1)) \in \psi(\cGpp)}
	\left( \#_1(v_1) \#_k(f(v_1))\!\!\! \prod_{2 \leq i \leq k-1}\!\!\! \#_{i}(v_{i})\right)
\end{align*}
}
In linear time we now iterate through all nodes $w$ in $\cGpp$ and set
\begin{align*}
\#'_1(w) &:= \#_1(w) \cdot \#_k(f(w))\\
\#'_{i}(w) &:= \#_{i}(w) & \text{ for } 2 \leq i \leq k-1.
\end{align*}
Let $\vartheta(\bar x)$ be $\psi$ with all occurrences of $y$ replaced with $f(x_1)$.
We then have:
{
\small
\begin{align*}
\cw{\psi}{\cGpp}{\#}
&= \sum_{(\bar v f(v_1)) \in \psi(\cGpp)}
	\left( \#'_1(v_1) \prod_{2 \leq i \leq k-1} \#'_{i}(v_{i})\right)\\
&= \sum_{\bar v \in \vartheta(\cGpp)}
	\prod_{1 \leq i \leq k-1} \#'_{i}(v_{i})\\
&= \cw{\vartheta}{\cGpp}{\#'}
\end{align*}
}

By induction on the number of free variables, as $\#'_i(w)$ can be computed in
constant time for each $i$ and $w$, we can compute $\cw{\vartheta}{\cGpp}{\#'}$ in time linear in
$\size{\cGpp}$ and we are done.

\medskip

Assume now that $\Delta^{=}$ consists of an atom $g(y) = f(x_1)$.  Let
$\psi'(y)$ be the formula $\exists \bar x \psi(\bar x y)$ and $\psi''(\bar x)$
the formula $\exists y \psi(\bar x y)$.  We first compute set $\psi'(\cGpp)$ in
linear time using Corollary~\ref{unary-FO}.  We now define a function $\#''_{k}
: V \rightarrow \Nat$ as:
\begin{equation*}
\#''_k(w) := \sum_{\substack{\set{u \in
    \psi'(\cGpp)\\ g(u) = w}}} \#_k(u).
\end{equation*}
Note that this function can be easily computed in linear time by going through
all nodes $w$ and adding $\#_k(w)$ to $\#''_k(g(w))$.

Finally we set:
\begin{align*}
\#'_1(w) &:= \#_1(w) \#''_k(f(w))\\
\#'_i(w) &:= \#_{i}(w) & \text{ for } 2 \leq i \leq k-1.
\end{align*}

Let $u_1, u_2 \in \psi'(\cGpp)$ be such that $g(u_1) = g(u_2)$.  Because
$\Delta^{\neq}$ is empty, observe that $\cGpp \models \forall \bar x
(\psi(\bar x u_1) \leftrightarrow \psi(\bar x u_2))$.  Based on this
observation we now group the solutions to $\psi$ according to their last $k-1$
values and get:

{
\small
\begin{align*}
\cw{\psi}{\cGpp}{\#}
&= \sum_{(\bar v u) \in \psi(\cGpp)}
	\left(\#_k(u) \prod_{1 \leq i \leq k-1} \#_{i}(v_{i})\right)\\
&=\!\!\! \sum_{\bar v \in \psi''(\cGpp)}
	\sum_{\substack{\set{u \in \psi'(\cGpp)\\ g(u) = f(v_1)}}}
	\left(\#_k(u) \prod_{1 \leq i \leq k-1} \#_{i}(v_{i}) \right)\\
&= \sum_{\bar v \in \psi''(\cGpp)} \left(\sum_{\substack{\set{u \in
        \psi'(\cGpp)\\ g(u) = f(v_1)}}} \#_k(u)\right)\prod_{1 \leq i \leq k-1} \#_{i}(v_{i})
	\\
&= \sum_{\bar v \in \psi''(\cGpp)}\left(
	\#''_k(f(v_1)) \prod_{1 \leq i \leq k-1} \#_{i}(v_{i})\right)\\
&=\!\!\!\! \sum_{\bar v \in \psi''(\cGpp)}
	\left(\#_1(v_1)\#''_k(f(v_1)) \prod_{2 \leq i \leq k-1} \#'_{i}(v_{i})\right)\\
&= \sum_{\bar v \in \psi''(\cGpp)}
	\prod_{1 \leq i \leq k-1} \#'_{i}(v_{i})\\
&= \cw{\psi''}{\cGpp}{\#'}
\end{align*}
}

By induction on the number of free variables, as $\#'_i(w)$ can be computed in
constant time for each $i$ and $w$, we can compute $\cw{\psi''}{\cGpp}{\#'}$
and we are done with this case.

The remaining case when $\Delta^{=}$ is empty is handled similarly to the previous one.
We then have
\begin{equation*}
\psi(\bar x y) = \psi_{1}(\bar x) \wedge \tau(y).
\end{equation*}

After setting
\begin{align*}
\#'_1(w) &:= \#_2(w) \cdot \sum_{u \in \tau(\cGpp)} \#_{1}(u)\\
\#'_i(w) &:= \#_{i+1}(w) & \text{ for } 2 \leq i \leq k-1
\end{align*}
we see that
\begin{equation*}
\cw{\psi}{\cGpp}{\#} = \cw{\psi_{1}}{\cGpp}{\#'}
\end{equation*}
and we conclude again by induction on the number of free variables.
\end{proof}
As we said earlier, Theorem~\ref{counting} is an immediate
consequence of Lemma~\ref{ct-lemma}.
\end{proof}

\section{Conclusions}

Queries written in first-order logic can be efficiently processed over the
class of structures having bounded expansion. We have seen that over this class
the problems investigated in this paper can be computed in time linear in the
size of the input structure. The constant factor however is high. The
approach taken here, as well as the ones of~\cite{DKT11,GK11},
yields a constant factor that is a tower of exponentials whose height depends
on the size of the query. This nonelementary constant factor is unavoidable
already on the class of unranked trees, assuming
FPT$\neq$AW$[*]$~\cite{FrickGrohe04}. In comparison, this factor can be
triply exponential in the size of the query in the bounded degree
case~\cite{Seese96,WL11}.

Since the submission of this work, the result has been extended to a larger
class of structures. In~\cite{NO11} the class of nowhere dense graphs was
introduced and it generalizes the notion of bounded expansion. It has
been shown that the model checking problem of first-order logic can be done in
nearly linear time (i.e., for any $\epsilon>0$ it can be done in
$O(n^{1+\epsilon})$) over any nowhere dense class of
graph~\cite{nowheredensemc}. Recently an enumeration procedure has
been proposed for first-order queries over nowhere dense graph classes, with a
nearly linear preprocessing time and constant
delay~\cite{DBLP:conf/pods/SchweikardtSV18}.

For graph classes closed under substructures, the nowhere dense property seems
to be the limit for having good algorithmic properties for first-order
logic. Indeed, it is known that the model checking problem of first-order logic
over a class of structures that is not nowhere dense cannot be FPT~\cite{KD09}
(modulo some complexity assumptions).

For structures of bounded expansion, an interesting open question is whether a
sampling of the solutions can be performed in linear time. For instance: can we
compute the $j$-th solution in constant time after a linear preprocessing? This
can be done in the bounded degree case~\cite{BDGO08} and in the bounded
treewidth case~\cite{Bagan09}. We leave the bounded expansion case for future
research.

Finally it would be interesting to know whether the index structure computed in
linear time for the enumeration process could be updated efficiently. In the
boolean case, queries of arity 0, updates can be done in constant
time~\cite{DKT11}, assuming the underlying graph is not changed too
much. In particular a relabeling of a node require only constant update
time. It would be interesting to know whether this constant update time could
also be achieved for an index structure allowing for constant delay
enumeration.

\bibliographystyle{alpha}
\bibliography{bibliography}

\begin{thebibliography}{BDGO08}

\bibitem[ALS91]{ALS91}
Stefan Arnborg, Jens Lagergren, and Detlef Seese.
\newblock Easy {P}roblems for {T}ree-{D}ecomposable {G}raphs.
\newblock {\em J. of Algorithms}, 12(2):308--340, 1991.

\bibitem[AYZ95]{AlonYZ95}
Noga Alon, Raphael Yuster, and Uri Zwick.
\newblock Color-{C}oding.
\newblock {\em J. ACM}, 42(4):844--856, 1995.

\bibitem[Bag06]{Bagan06}
Guillaume Bagan.
\newblock {MSO} {Q}ueries on {T}ree {D}ecomposable {S}tructures {A}re
  {C}omputable with {L}inear {D}elay.
\newblock In {\em Conf. on Computer Science Logic (CSL)}, pages 167--181, 2006.

\bibitem[Bag09]{Bagan09}
Guillaume Bagan.
\newblock {\em Algorithmes et complexit{\'e} des probl{\`e}mes
  d'{\'e}num{\'e}ration pour l'{\'e}valuation de requ{\^e}tes logiques}.
\newblock PhD thesis, Universit{\'e} de Caen, 2009.

\bibitem[BDG07]{BDG07}
Guillaume Bagan, Arnaud Durand, and Etienne Grandjean.
\newblock On {A}cyclic {C}onjunctive {Q}ueries and {C}onstant {D}elay
  {E}numeration.
\newblock In {\em Conf. on Computer Science Logic (CSL)}, pages 208--222, 2007.

\bibitem[BDGO08]{BDGO08}
Guillaume Bagan, Arnaud Durand, Etienne Grandjean, and Fr{\'e}d{\'e}ric Olive.
\newblock Computing the jth solution of a first-order query.
\newblock {\em RAIRO Theoretical Informatics and Applications}, 42(1):147--164,
  2008.

\bibitem[Cou90]{Courcelle90}
Bruno Courcelle.
\newblock Graph {R}ewriting: {A}n {A}lgebraic and {L}ogic {A}pproach.
\newblock In {\em Handbook of Theoretical Computer Science, Volume B: Formal
  Models and Sematics (B)}, pages 193--242. 1990.

\bibitem[DG07]{DurandGrandjean07}
Arnaud Durand and Etienne Grandjean.
\newblock First-order queries on structures of bounded degree are computable
  with constant delay.
\newblock {\em ACM Trans. on Computational Logic (ToCL)}, 8(4), 2007.

\bibitem[DKT13]{DKT11}
Zdenek Dvorak, Daniel Kr{\'{a}}l, and Robin Thomas.
\newblock Testing first-order properties for subclasses of sparse graphs.
\newblock {\em J. {ACM}}, 60(5):36:1--36:24, 2013.

\bibitem[FG04]{FrickGrohe04}
Markus Frick and Martin Grohe.
\newblock The complexity of first-order and monadic second-order logic
  revisited.
\newblock {\em Ann. Pure Appl. Logic}, 130(1-3):3--31, 2004.

\bibitem[GK11]{GK11}
Martin Grohe and Stephan Kreutzer.
\newblock {\em Model {T}heoretic {M}ethods in {F}inite {C}ombinatorics},
  chapter Methods for Algorithmic Meta Theorems.
\newblock American Mathematical Society, 2011.

\bibitem[GKS17]{nowheredensemc}
Martin Grohe, Stephan Kreutzer, and Sebastian Siebertz.
\newblock Deciding first-order properties of nowhere dense graphs.
\newblock {\em J. of the ACM}, 64(3):17:1--17:32, 2017.

\bibitem[Kaz13]{thesis-kazana}
Wojciech Kazana.
\newblock {\em Query evaluation with constant delay. (L'{\'{e}}valuation de
  requ{\^{e}}tes avec un d{\'{e}}lai constant)}.
\newblock PhD thesis, {\'{E}}cole normale sup{\'{e}}rieure de Cachan, Paris,
  France, 2013.

\bibitem[KD09]{KD09}
Stephan Kreutzer and Anuj Dawar.
\newblock Parameterized complexity of first-order logic.
\newblock {\em Electronic {C}olloquium on {C}omputational {C}omplexity
  ({ECCC})}, 16:131, 2009.

\bibitem[KS11]{WL11}
Wojciech Kazana and Luc Segoufin.
\newblock First-order query evaluation on structures of bounded degree.
\newblock {\em Logical Methods in Computer Science (LMCS)}, 7(2), 2011.

\bibitem[KS13a]{DBLP:conf/pods/KazanaS13}
Wojciech Kazana and Luc Segoufin.
\newblock Enumeration of first-order queries on classes of structures with
  bounded expansion.
\newblock In {\em Symp. on Principles of Database Systems (PODS)}, pages
  297--308, 2013.

\bibitem[KS13b]{WL12}
Wojciech Kazana and Luc Segoufin.
\newblock Enumeration of monadic second-order queries on trees.
\newblock {\em ACM Trans. on Computational Logic (ToCL)}, 14(4), 2013.

\bibitem[NdM08a]{NesetrilMendez08}
Jaroslav Ne{\v{s}}et{\v{r}}il and Patrice~Ossona de~Mendez.
\newblock Grad and classes with bounded expansion {I}. {D}ecompositions.
\newblock {\em Eur. J. Comb.}, 29(3):760--776, 2008.

\bibitem[NdM08b]{NesetrilMendez08partII}
Jaroslav Ne{\v{s}}et{\v{r}}il and Patrice~Ossona de~Mendez.
\newblock Grad and classes with bounded expansion {II}. {A}lgorithmic aspects.
\newblock {\em Eur. J. Comb.}, 29(3):777--791, 2008.

\bibitem[NdM11]{NO11}
Jaroslav Ne{\v{s}}et{\v{r}}il and Patrice~Ossona de~Mendez.
\newblock On nowhere dense graphs.
\newblock {\em European J. of Combinatorics}, 32(4):600--617, 2011.

\bibitem[PY99]{PY99}
Christos~H. Papadimitriou and Mihalis Yannakakis.
\newblock On the {C}omplexity of {D}atabase {Q}ueries.
\newblock {\em J. on Computer and System Sciences (JCSS)}, 58(3):407--427,
  1999.

\bibitem[See96]{Seese96}
Detlef Seese.
\newblock Linear {T}ime {C}omputable {P}roblems and {F}irst-{O}rder
  {D}escriptions.
\newblock {\em Mathematical Structures in Computer Science}, 6(6):505--526,
  1996.

\bibitem[SSV18]{DBLP:conf/pods/SchweikardtSV18}
Nicole Schweikardt, Luc Segoufin, and Alexandre Vigny.
\newblock Enumeration for {FO} queries over nowhere dense graphs.
\newblock In {\em Symp. on Principles of Database Systems (PODS)}, pages
  151--163, 2018.

\bibitem[SV17]{DBLP:conf/icdt/SegoufinV17}
Luc Segoufin and Alexandre Vigny.
\newblock Constant delay enumeration for {FO} queries over databases with local
  bounded expansion.
\newblock In {\em Intl. Conf. on Database Theory}, pages 20:1--20:16, 2017.

\end{thebibliography}

\end{document}